\theoremstyle{theorem}
\newtheorem{theorem}{Theorem}
\newtheorem{fact}{Fact}
\newtheorem{claim}{Claim}
\newtheorem{lemma}{Lemma}
\newtheorem{corollary}{Corollary}
\newtheorem{remark}{Remark}
\newenvironment{reminder}[1]{\smallskip
	\noindent {\bf Reminder of #1.}\em}{\smallskip}
\theoremstyle{definition}
\newtheorem{definition}{Definition}
\newenvironment{proofof}[1]{\begin{proof}[{\textit{Proof of #1}}]}{\end{proof}}
\newcommand{\Lower}{\mathsf{Lower}}
\newcommand{\Upper}{\mathsf{Upper}}
\newcommand{\COMP}{\mathrm{COMP}}
\newcommand{\calA}{\mathcal{A}}
\newcommand{\calB}{{\mathcal{B}}}
\newcommand{\calE}{{\mathcal{E}}}
\newcommand{\calM}{{\mathcal{M}}}
\newcommand{\calN}{{\mathcal{N}}}
\newcommand{\calX}{{\mathcal{X}}}
\newcommand{\calY}{{\mathcal{Y}}}
\newcommand{\calZ}{{\mathcal{Z}}}
\newcommand{\eps}{\varepsilon}
\newcommand{\calS}{\mathcal{S}}
\newcommand{\IT}{\mathbf{IT}}
\newcommand{\RR}{\mathrm{RR}}
\newcommand{\Output}{\mathbf{Output}}
\newcommand{\supp}{\mathrm{supp}}
\DeclareMathOperator*{\Ex}{\mathbb{E}}
\title{Composition Theorems for Interactive Differential Privacy}
\author{%
Xin Lyu \\
Department of Electrical Engineering and Computer Science \\
University of California, Berkeley \\
Berkeley, CA, 94720 \\
\text{xinlyu@berkeley.edu} \\
}
\begin{document}

\maketitle

\begin{abstract}
    An interactive mechanism is an algorithm that stores a data set and answers adaptively chosen queries to it. The mechanism is called differentially private, if any adversary cannot distinguish whether a specific individual is in the data set by interacting with the mechanism. We study composition properties of differential privacy in concurrent compositions. In this setting, an adversary interacts with $k$ interactive mechanisms in parallel and can interleave its queries to the mechanisms arbitrarily. Previously, \citet{DBLP:conf/tcc/VadhanW21} proved an optimal concurrent composition theorem for pure-differential privacy. We significantly generalize and extend their results. Namely, we prove optimal parallel composition properties for several major notions of differential privacy in the literature, including approximate DP, R\'enyi DP, and zero-concentrated DP. Our results demonstrate that the adversary gains no advantage by interleaving its queries to independently running mechanisms. Hence, interactivity is a feature that differential privacy grants us for free.
    
    Concurrently and independently of our work, \citet{VadhanZ2022} proved an optimal concurrent composition theorem for $f$-DP \citep{Dong2022GaussianDP}, which implies our result for the approximate DP case.
\end{abstract}

\section{Introduction}

By now, differential privacy \citep{DBLP:conf/tcc/DworkMNS06} has been widely accepted as a standard framework for protecting individual privacy when performing data analysis on data sets that may contain sensitive information of individuals (see, e.g., the surveys by \citet{DBLP:journals/fttcs/DworkR14, DBLP:books/sp/17/Vadhan17}).

Let $\calM$ be an algorithm that runs on a data set $x$ and calculates some information about it. Roughly speaking, $\calM$ is called differentially private, if the output distribution of $\calA$ remains nearly identical when we arbitrarily modify a single entry in $x$.

One essential feature of differential privacy is its \emph{composability}. Composition captures the scenario where a data analyst runs $k$ differentially private algorithms sequentially, and releases the results afterward. Typically, a composition theorem has the following form: if each of the $k$ algorithms satisfies differential privacy, then the analyst's output is still differentially private with moderately degraded privacy parameters.


Composition theorems are important for at least two reasons. First, we might want to perform computation tasks on the same data set multiple times and still have reasonable control over the privacy loss. In this case, composition theorems reveal how the privacy guarantee degrades over time. More importantly, composition theorems allow us to build more complex and powerful differentially-private algorithms from simple primitives, and argue the privacy guarantee of the combined algorithm in a straightforward way.

There is a rich literature concerning the composition property of differential privacy (see, e.g., \citet{DBLP:conf/eurocrypt/DworkKMMN06, DBLP:conf/focs/DworkRV10, DBLP:conf/icml/KairouzOV15, DBLP:journals/toc/MurtaghV18, DBLP:journals/siamcomp/BassilyNSSSU21}). However, most existing composition theorems only consider the scenario where an analyst runs several private algorithms \emph{sequentially}. That is, the analyst will only move on to the next algorithm after finishing their computation with the previous one. In contrast, many fundamental primitives in differential privacy are interactive in nature, such as the sparse vector technique \citep{DBLP:conf/stoc/DworkNRRV09-sparse_vector, DBLP:conf/stoc/RothR10-sparse_vector} and private multiplicative weight updates \citep{DBLP:conf/focs/HardtR10-privateMWU}. Hence, the interactivity issue appears to be a significant limitation of current composition theorems. Namely, the data analyst may want to communicate with several interactive mechanisms \emph{concurrently}, and interleave its queries to the mechanisms arbitrarily. A sequential composition theorem completely fails to capture this scenario. Also, in practice, deployments of DP algorithms often demand a better understanding of concurrent compositions of interactive mechanisms \citep{OpenDP}.

Recently, \citet{DBLP:conf/tcc/VadhanW21} initiated a study of concurrent compositions and proved an optimal concurrent composition theorem for pure differential privacy. In this work, we significantly advance this research direction by proving optimal concurrent composition theorems for several popular notions of differential privacy, including approximate DP, R\'enyi DP, zero-concentrated DP and truncated concentrated DP.

\subsection{Setup}

Before we continue, we set up necessary pieces of notation. We use $\calX$ and $\calY$ to denote the domain of query messages and responses, respectively. We assume that both $\calX$ and $\calY$ are finite sets. This assumption is for easing some mathematical manipulation and is not restrictive: all practical applications of differential privacy have finite input and output spaces anyway.

For a set $S$, denote by $\Delta(S)$ the set of all possible distributions supported over $S$. We define interactive systems below.

\begin{definition}[Interactive system]\label{def:system}
An interactive system is a (randomized) algorithm $\calM\colon (\calX\times \calY)^* \times \calX \to \Delta(\calY)$. The input to $\calM$ is an interaction history $(x_1,y_1),(x_2,y_2),\dots, (x_t, y_t)\in (\calX\times \calY)^t$ together with a query $x_{t+1}$. The output of $\calM$ is denoted by $y_{t+1} \sim \calM((x_i,y_i)_{i\in [t]}, x_{t+1})$.
\end{definition}

A technicality worth mentioning is that due to the internal memory and randomness of an interactive system $\calM$, the response of $\calM$ to the $(t+1)$-th query might be correlated with its responses to previous queries. Although the internal randomness of $\calM$ is not explicitly stated as a parameter, Definition~\ref{def:system} captures this correlation by requiring that each query $x_{t+1}$ to $\calM$ is attached with the interaction history $(x_1,y_1),\dots, (x_t, y_t)$. This history is sufficient for determining the conditional distribution of the response $\calM((x_i,y_i)_{i\in [t]}, x_{t+1})$ without specifying the internal randomness and memory.

We make a distinction between mechanisms and systems. By ``mechanism'' we mean a differentially private algorithm $\calM$ that holds a sensitive input $d$ and answers queries about it. When applied to a concrete input $d$, $\calM$ induces an interactive system, denoted by $\calM^d$. According to the definition of differential privacy, studying the privacy of a mechanism boils down to studying the pair of systems $(\calM^{d},\calM^{d'})$ induced by running $\calM$ on every pair of neighboring inputs $(d,d')$. For brevity, we usually assume W.L.O.G. that the input only consists of a single bit $b\in \{0,1\}$, and we compare the two systems $\calM^0,\calM^1$ induced by $\calM$.

\noindent\textbf{Concurrent composition.} We define concurrent composition of interactive systems. Suppose $\calM_1,\calM_2,\dots, \calM_k$ are $k$ systems. The concurrent composition of them is an interactive system $\COMP(\calM_1\dots \calM_k)$ with query domain $[k]\times \calX$ and response domain $\calY$. An adversary is a (possibly randomized) query algorithm $\calA : ([k]\times \calX \times \calY)^*\to \Delta([k]\times \calX)$. The interaction between $\calA$ and $\COMP(\calM_i)$ is a stochastic process that runs as follows. $\calA$ first\footnote{We assume it is always the adversary who sends the first message. This is without loss of generality: we can let the first message sent from the adversary to each system be an ``Initiliazation'' query. Having received the initialization query, the system returns either a starting message or simply a ``SUCCESS'' symbol.} computes a pair $(i_1,x_1)\in [k]\times \calX$, sends a query $x_1$ to $\calM_{i_1}$ and gets the response $y_1$. In the $t$-th step, $\calA$ calculates the next pair $(i_t,x_t)$ based on the history, sends the $t$-th query $x_t$ to $\calM_{i_t}$ and receives $y_t$. There is no communication or interaction between the interactive systems. Each system $\calM_i$ can only see its own interaction with $\calA$. Let $\IT(\calA : \calM_1,\dots, \calM_k)$ denote the random variable recording the transcript of the interaction.

In the special case $k = 1$, there is only one system $\calM$ and the adversary is interacting with it. We define approximate differential privacy for interactive mechanisms in this case.

\begin{definition}[Indistinguishability and $(\eps,\delta)$-DP]
Two interactive systems $\calM^0, \calM^1$ are called $(\eps, \delta)$-indistinguishable, if for every $b\in \{0,1\}$, every adversary $\calA$ and every collection of transcripts $S \subseteq \{ (x_i, y_i)_{i\in [T]} \}$, it holds that
\begin{align}
\Pr[\IT(\calA: \calM^{b}) \in S] \le e^{\eps} \Pr[\IT(\calA: \calM^{1-b})\in S] + \delta. \label{eq:indistinguishable-def}
\end{align}
Let $\calM$ be an interactive mechanism. $\calM$ is called $(\eps,\delta)$-approximate differentially private (or $(\varepsilon,\delta)$-DP for short), if for every two neighboring data sets $d$ and $d'$, the systems $\calM^d$ and $\calM^{d'}$ are $(\eps,\delta)$-indistinguishable.
\end{definition}

\subsection{Differential Privacy in Concurrent Compositions}

We study the privacy guarantee under concurrent compositions. Let $\calM^b_1,\dots, \calM^b_k$ be $k$ interactive mechanisms, each satisfying $(\eps,\delta)$-DP. Consider their concurrent composition $\COMP(\calM^b_1\dots\calM^b_k)$. We want to find out the smallest parameters $\eps',\delta'$ such that $\COMP(\calM^b_i)$ satisfies $(\eps',\delta')$-DP. In the sequential composition, the adversary $\calA$ interacts with $\calM_i$'s in order and cannot interleave its queries. In this case, it is known by the advanced composition theorem \citep{DBLP:conf/focs/DworkRV10} that $\IT(\calA : \calM^0_1,\dots,\calM^0_k)$ and $\IT(\calA : \calM^1_1,\dots, \calM^1_k)$ are $(O(\sqrt{k \log(1/\delta')}\eps),k\delta + \delta')$-indistinguishable.

However, in general, the adversary can interleave its queries arbitrarily, and the differential privacy guarantee warranted by $\COMP(\calM_i)$ is less clear. \citet{DBLP:conf/tcc/VadhanW21} were the first to formally study this question. They showed that in the special case $\delta = 0$, an optimal composition holds for $\COMP(\calM_i)$. That is, if we can prove an $(\eps',\delta')$ upper bound on the privacy parameter for sequential compositions of $\calM^b_1,\dots, \calM^b_k$, then the concurrent composition $\COMP(\calM_i)$ also enjoys the same $(\eps',\delta')$-DP. 

\citet{DBLP:conf/tcc/VadhanW21} also considered the case $\delta > 0$ (i.e., approximate DP). However, for this case, they only showed an upper bound on $\eps',\delta'$ that is inferior to the basic composition in the sequential setting. It was asked as an open question in \cite{DBLP:conf/tcc/VadhanW21} whether the optimal composition theorem for approximate DP still holds in the concurrent composition.

Besides pure and approximate DP, there are also other notions of differential privacy that are extensively studied in the literature. A non-exhaustive list includes R\'enyi DP \citep{DBLP:conf/csfw/Mironov17-renyi_dp}, concentrated DP \citep{DBLP:journals/corr/DworkR16, DBLP:conf/tcc/BunS16,DBLP:conf/stoc/BunDRS18}, Gaussian DP and $f$-DP \citep{Dong2022GaussianDP} etc. Compared with the standard notion of $(\eps,\delta)$-approximate DP, these variants of DP either allow for a simplified analysis of private algorithms or give sharper bounds of privacy guarantee. In the sequential composition, the composition property of these variants has been well understood. It is also interesting to extend these composition theorems to the concurrent composition, thereby expanding the potential applicability of these DP notions.

\section{Our Results}

In this work, we give an affirmative answer to the open question mentioned above. Moreover, our result confirms that several major differential privacy definitions in the literature enjoy the same composition guarantee in the concurrent composition, just as they do in the sequential composition.

\noindent\textbf{Approximate differential privacy.} $(\eps,\delta)$-DP is arguably the most widely studied notion of differential privacy and is deemed the ``standard'' definition of DP. As our first main result, we show an optimal concurrent composition theorem for approximate DP.

\begin{theorem}\label{theo:approximate}
Let $\calM_1,\dots, \calM_k$ be $k$ interactive mechanisms that run on the same data set. Suppose that each mechanism $\calM_i$ satisfies $(\eps_i, \delta_i)$-DP. Then $\COMP(\calM_1\dots\calM_k)$ is $(\eps', \delta')$-DP, where $\eps', \delta'$ are given by the optimal (sequential) composition theorem \citep{DBLP:conf/icml/KairouzOV15,DBLP:journals/toc/MurtaghV18}.

In particular, when the privacy parameter for each mechanism is the same $(\eps, \delta)$, their concurrent composition satisfies $O(\sqrt{k\log(1/\delta')}\eps, \delta' + k\delta)$-DP for all $\delta' \in (0, 1)$.
\end{theorem}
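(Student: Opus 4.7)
The plan is to reduce the concurrent composition of approximate-DP interactive mechanisms to the concurrent composition of pure-DP interactive mechanisms, which is handled by the theorem of \citet{DBLP:conf/tcc/VadhanW21}. The bridge between the two cases is an interactive analogue of the classical ``two-distribution decomposition'' characterization of approximate DP.

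\textbf{Step 1: Interactive coupling/simulator lemma.} I would first prove a lemma asserting that for each mechanism $\calM_i$ satisfying $(\eps_i,\delta_i)$-DP, there exist interactive systems $\calN_i^0,\calN_i^1$ satisfying pure $(\eps_i,0)$-DP together with a coupling between $(\calM_i^b,\calN_i^b)$ (for $b\in\{0,1\}$) such that, for any adversary interacting with them, the transcripts produced by $\calM_i^b$ and $\calN_i^b$ are identical except on an event $E_i^b$ of probability at most $\delta_i$. In the non-interactive case this is the standard decomposition $\calM_i^b = (1-\delta_i)\calN_i^b + \delta_i\calR_i^b$; the interactive version has to be formulated round-by-round, allocating the $\delta_i$ ``budget'' across rounds in a way that is consistent under every adversarial query schedule.

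\textbf{Step 2: Global coupling and reduction to pure DP.} Given the per-mechanism couplings, I would lift them to a joint coupling between $\COMP(\calM_1^b,\dots,\calM_k^b)$ and $\COMP(\calN_1^b,\dots,\calN_k^b)$, using independence of the individual mechanisms' internal randomness. A union bound gives that the concurrent transcripts coincide except on an event of probability at most $\sum_i \delta_i$. Now \citet{DBLP:conf/tcc/VadhanW21}'s optimal pure-DP concurrent composition theorem applies to $\COMP(\calN_1,\dots,\calN_k)$, yielding the pure-DP bound matching the $\delta=0$ case of \citet{DBLP:conf/icml/KairouzOV15,DBLP:journals/toc/MurtaghV18}.

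\textbf{Step 3: Recombine to obtain optimal approximate-DP bound.} By the coupling, any ``bad'' transcript set $S$ for the concurrent composition of $\calM_i$'s satisfies $\Pr[\IT(\calA:\calM_1^0,\dots)\in S] \le \Pr[\IT(\calA:\calN_1^0,\dots)\in S] + \sum_i\delta_i$, and similarly for $\calM_i^1$. Combining with the pure-DP inequality for $\calN_i$'s, I would carefully chase the constants so that the overall $(\eps',\delta')$ matches the optimal sequential bound (which for homogeneous parameters collapses to $\bigl(O(\sqrt{k\log(1/\delta')}\,\eps),\,\delta'+k\delta\bigr)$ by advanced composition). Sharpening $\sum_i\delta_i$ to the actual Kairouz--Oh--Viswanath/Murtagh--Vadhan formula requires using the multiplicative form $1-\prod_i(1-\delta_i)$ in place of the union bound, which is obtainable because the coupling events $E_i^b$ depend only on the internal randomness of $\calM_i$ and hence are independent across $i$.

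\textbf{Main obstacle.} Step 1 is where the real work lies. In the non-interactive setting the decomposition is immediate from the definition, but for an interactive mechanism the ``bad event'' must be defined in an adversary-oblivious way so that the same $\calN_i$ works against every query strategy. This requires showing that one can commit in advance, at each round, to a conditional distribution that is $(\eps_i,0)$-close to $\calM_i$'s response distribution, while keeping the total ``mass'' deleted bounded by $\delta_i$ along every branch of the adversary's decision tree. A plausible route is a greedy construction defining, for every possible history, the truncation that removes the worst-case privacy-violating outcomes, together with a martingale argument to show that the deleted mass accumulated along any interaction path is at most $\delta_i$ in expectation; alternatively one can argue via the hockey-stick divergence $D_{e^{\eps_i}}$ and an operational characterization showing it is not increased by adaptive interleaving.
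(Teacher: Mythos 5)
Your overall plan---decompose each approximate-DP interactive pair into a pure-DP pair plus a $\delta_i$-weight ``bad'' component, reduce to the pure case via \citet{DBLP:conf/tcc/VadhanW21}, and recombine---is essentially the route the paper takes, and the recombination in Steps 2--3 can be made to match the optimal bound, since the Kairouz--Oh--Viswanath/Murtagh--Vadhan $\delta$ already factorizes as $1-(1-\tilde\delta)\prod_i(1-\delta_i)$ with $\tilde\delta$ the pure-case quantity. One caveat: you must use the mixture form $\IT(\calA:\calM_i^b)\equiv(1-\delta_i)\IT(\calA:\calN_i^b)+\delta_i\IT(\calA:\calE_i^b)$ with the \emph{same} weight $\delta_i$ for $b=0,1$, not merely a TV-style coupling ``transcripts agree except with probability $\delta_i$''; with only the coupling guarantee the lower-bound direction costs an additive $\delta_i$ that gets multiplied by $e^{\eps'}$, and optimality is lost. (The paper's finish differs cosmetically: it further splits $\calN_i^b$ via Lemma~\ref{lemma:pure-decompose} into a randomized-response mixture, so the whole concurrent interaction becomes a post-processing of a \emph{sequential} composition of $k$ approximate randomized responses and the optimal sequential theorem is invoked as a black box, rather than re-deriving the $\delta$-factorization.)

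The genuine gap is Step 1, which you correctly flag as the main obstacle but do not prove: it is precisely the paper's central technical contribution (Theorem~\ref{theo:decomposition}, via Lemma~\ref{lemma:decomposition} and Lemma~\ref{lemma:system-subtraction}), and the routes you sketch do not work as stated. A ``martingale argument showing the deleted mass is at most $\delta_i$ \emph{in expectation}'' is the wrong target: one needs a single, adversary-oblivious, history-indexed specification of $\calE_i^b$ and $\calN_i^b$ such that, against \emph{every} query strategy, (i) the bad component carries weight $\delta_i$, (ii) the residual pair $(\calN_i^0,\calN_i^1)$ is $(\eps_i,0)$-indistinguishable for all future continuations, and (iii) each component is a valid interactive system (nonnegative conditional probabilities summing to the parent mass at every history). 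A round-by-round greedy truncation of ``worst-case privacy-violating outcomes'' addresses neither the look-ahead in (ii)---how much mass must be set aside now depends on what violations an adversary can still force later, which is why the paper introduces the backward-recursive control function $\Lower$ and proves its feasibility at the root (Claim~\ref{claim:start-good}) by constructing the maximizing adversary and invoking $(\eps,\delta)$-indistinguishability---nor the normalization in (iii), which the paper obtains only through the upper control function $\Upper$ (Claim~\ref{claim:up-good}) and a delicate ``Gap'' adjustment showing the two constraints can be met simultaneously. The hockey-stick alternative faces the same adaptivity issue: the divergence bound holds per adversary, and turning it into one decomposition that works simultaneously for all adversaries is exactly the content of the missing lemma. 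Until that lemma is established, your proposal is an outline of the paper's argument rather than a proof.
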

\noindent\textbf{R\'enyi differential privacy.} R\'enyi differential privacy was first defined by \citet{DBLP:conf/csfw/Mironov17-renyi_dp}. We recall its definition.

\begin{definition}[R\'enyi divergence and differential privacy]\label{def:renyi}
Let $P, Q$ be two distributions supported over $\calX$. For each $\alpha > 1$, define the R\'enyi divergence of order $\alpha$ of $P$ from $Q$ as
\[
D_{\alpha}(P\| Q) := \frac{1}{\alpha - 1} \log\left( \Ex_{x\sim P}\left[ \left(\frac{P(x)}{Q(x)} \right)^{\alpha - 1} \right] \right).
\]
Two interactive systems are called $(\alpha, \eps)$-R\'enyi close, if for every adversary $\calA$ and every $b\in \{0,1\}$, it holds that
\[
D_{\alpha}( \IT(\calA : \calM^b) \| \IT(\calA:\calM^{1-b})) \le \eps.
\]
Let $\calM$ be a mechanism. $\calM$ is called $(\alpha,\eps)$-R\'enyi differentially private (or $(\alpha,\eps)$-RDP for short), if for every two neighboring data sets $d$ and $d'$, the systems $\calM^d$ and $\calM^{d'}$ are $(\alpha,\eps)$-R\'enyi close.
\end{definition}
A main advantage of R\'enyi DP is that it has a natural and simple composition. In the sequential setting, it is known that if two mechanisms $\calM_1,\calM_2$ are $(\alpha,\eps_1)$ and $(\alpha,\eps_2)$-RDP, respectively, then the composition of $\calM_1$ and $\calM_2$ is $(\alpha,\eps_1 + \eps_2)$-RDP. Our next theorem generalizes this result to the concurrent composition setting.
\begin{theorem}\label{theo:renyi}
Let $\calM_1,\dots, \calM_k$ be $k$ interactive mechanisms that run on the same data set. Suppose that each mechanism $\calM_i$ is $(\alpha, \eps_i)$-RDP. Then $\COMP(\calM_1\dots\calM_k)$ is $(\alpha, \sum_{i=1}^{k} \eps_i)$-RDP.
\end{theorem}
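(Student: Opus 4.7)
The plan is to proceed by induction on $k$. The base case $k=1$ is immediate, and the inductive step reduces to the case $k=2$: by the inductive hypothesis, $\COMP(\calM_1,\dots,\calM_{k-1})$ is $(\alpha,\sum_{i<k}\eps_i)$-RDP, and composing it with $\calM_k$ via the $k=2$ lemma yields the desired bound. Thus the heart of the argument is to prove the following $k=2$ statement: if $\calM_1$ is $(\alpha,\eps_1)$-RDP and $\calM_2$ is $(\alpha,\eps_2)$-RDP, then $\COMP(\calM_1,\calM_2)$ is $(\alpha,\eps_1+\eps_2)$-RDP.

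Fix an adversary $\calA$ and let $H_b$ denote the distribution of $\IT(\calA : \calM_1^b,\calM_2^b)$. Expanding the transcript density round by round, the adversary's own transition probabilities cancel in the likelihood ratio and we obtain the factorization
\[
\frac{H_0(h)}{H_1(h)} \;=\; L_1(h)\cdot L_2(h), \qquad L_i(h) := \prod_{t\,:\,i_t=i}\frac{\calM_i^0(s^{(i)}_{<t},x_t)(y_t)}{\calM_i^1(s^{(i)}_{<t},x_t)(y_t)},
\]
where $s^{(i)}_{<t}$ denotes the $\calM_i$-sub-history before round $t$. The crucial observation is that $L_i$ depends only on the $\calM_i$-sub-transcript $S_i$. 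To conclude $D_\alpha(H_0\|H_1)\le\eps_1+\eps_2$ it suffices to show $\Ex_{H_0}[L_1^{\alpha-1}L_2^{\alpha-1}]\le e^{(\alpha-1)(\eps_1+\eps_2)}$.

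I would bound this expectation using two ``derived-adversary'' arguments combined via the tower property. First, for each candidate $\calM_1$-sub-transcript $s_1$, construct an adversary $\calA'_{s_1}$ for $\calM_2$ alone that internally simulates $\calA$, answering $\calA$'s queries to $\calM_1$ by replaying the responses in $s_1$. Conditional on $S_1=s_1$ under $H_0$, the distribution of $S_2$ equals $\IT(\calA'_{s_1}:\calM_2^0)$ and $L_2$ is precisely its likelihood ratio against $\IT(\calA'_{s_1}:\calM_2^1)$; applying $(\alpha,\eps_2)$-RDP of $\calM_2$ to $\calA'_{s_1}$ yields $\Ex_{H_0}[L_2^{\alpha-1}\mid S_1=s_1]\le e^{(\alpha-1)\eps_2}$ uniformly in $s_1$. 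Second, the marginal of $S_1$ under $H_0$ equals $\IT(\tilde{\calA}:\calM_1^0)$ for a derived adversary $\tilde{\calA}$ that internally runs $\calA$ together with $\calM_2^0$ and exposes only the queries to $\calM_1$; since $L_1$ depends only on $S_1$, it coincides with the likelihood ratio of $\IT(\tilde{\calA}:\calM_1^0)$ against $\IT(\tilde{\calA}:\calM_1^1)$, and $(\alpha,\eps_1)$-RDP of $\calM_1$ yields $\Ex_{H_0}[L_1^{\alpha-1}]\le e^{(\alpha-1)\eps_1}$. Multiplying the two bounds via the tower property closes the argument.

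The main subtlety I anticipate is justifying the identifications in the second bound: that the marginal of $S_1$ under $H_0$ genuinely coincides with $\IT(\tilde{\calA}:\calM_1^0)$, and that after integrating out $\calM_2$'s contributions the factor $L_1(S_1)$ continues to be the correct likelihood ratio for this derived single-mechanism interaction. Both claims hinge on $L_1$ being measurable with respect to $S_1$ alone, so that $L_1(s_1)$ pulls out of any integration over full transcripts $h$ consistent with $S_1=s_1$, giving exactly $\Pr[S_1=s_1\mid b=0]/\Pr[S_1=s_1\mid b=1]$. A minor additional point is that $\calA'_{s_1}$ should be a well-defined adversary; one can make it abort whenever the simulated $\calA$ issues a query to $\calM_1$ inconsistent with $s_1$, an event of zero conditional probability.
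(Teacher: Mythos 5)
Your factorization $H_0(h)/H_1(h)=L_1(h)L_2(h)$ with $L_i$ measurable with respect to the sub-transcript $S_i$ is correct, and so is the second derived-adversary step: the marginal of $S_1$ under $H_0$ really is $\IT(\tilde\calA:\calM_1^0)$ for the adversary that internally runs $\calA$ together with $\calM_2^0$, which gives $\Ex_{H_0}[L_1^{\alpha-1}]\le e^{(\alpha-1)\eps_1}$. The gap is in the first step. In a \emph{concurrent} interaction, $S_1$ contains responses that $\calM_1$ issues \emph{after} some of $\calM_2$'s responses, so conditioning on all of $S_1$ does not leave $S_2$ distributed as a genuine interaction with $\calM_2^0$: the conditional law of $S_2$ given $S_1=s_1$ is the law generated by your $\calA'_{s_1}$ \emph{tilted} by the likelihood of $\calM_1$'s later responses, which depend on $S_2$ through the adversary's later queries. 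Concretely, take three rounds $\calM_1,\calM_2,\calM_1$ with $\calM_1^0=\calM_1^1$ (so $\eps_1=0$, $L_1\equiv 1$), let $\calM_2^b$ answer a single bit with $\Pr[y_2=1]=1/2$ under $b=0$ and $1/4$ under $b=1$, and let the adversary set $x_3=y_2$, where $\calM_1$'s response $y_3=a$ has probability $1$ if $x_3=1$ and probability $\gamma\to 0$ if $x_3=0$. Then conditionally on $S_1$ containing $y_3=a$, $y_2=1$ almost surely, and for $\alpha=2$ one gets $\Ex_{H_0}[L_2^{\alpha-1}\mid S_1=s_1]\to 2$, while $e^{(\alpha-1)\eps_2}=4/3$. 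So the uniform conditional bound you invoke is false, and the tower-property argument $\Ex[L_1^{\alpha-1}\Ex[L_2^{\alpha-1}\mid S_1]]\le e^{(\alpha-1)\eps_2}\,\Ex[L_1^{\alpha-1}]$ breaks down. (The theorem itself is not violated in this example because the \emph{unconditional} marginal of $y_2$ is untilted; it is only the conditional step that overshoots.)

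This is precisely the obstruction that makes concurrent RDP composition harder than sequential composition, and it is what the paper's proof is built to circumvent: rather than conditioning on one mechanism's full sub-transcript, the paper peels off the interaction round by round from the last query backwards, using the H\"older-dual characterization of R\'enyi divergence (Lemma~\ref{lemma:renyi-charact}) together with a ``privacy budget monitor'' $\ell_t(y_{\le t})$, defined as the supremum over all future adversaries of the divergence between the conditioned systems $\calM_1^0|_{y_{\le t}}$ and $\calM_1^1|_{y_{\le t}}$; Lemma~\ref{lemma:renyi-tracker} shows this budget is conserved in the right averaged ($\beta$-dominance) sense, which is what replaces the uniform conditional bound you wanted. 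To repair your argument you would need some analogous device that tracks, after each prefix of the interleaved transcript, how much divergence $\calM_2$ can still generate, instead of applying the one-shot RDP guarantee conditionally on information from the future.
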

One implication of Theorem~\ref{theo:renyi} is that the zero-concentrated differential privacy by \citet{DBLP:conf/tcc/BunS16} and the truncated concentrated differential privacy by \cite{DBLP:conf/stoc/BunDRS18} also compose nicely under the concurrent composition. We state the corollary below, and prove it in Appendix~\ref{append:concentrate} for completeness.

\begin{corollary}\label{coro:CDP}
Let $\calM_1,\dots, \calM_k$ are $k$ interactive mechanisms that run on the same data set. Suppose that each mechanism $\calM_i$ is $\eta_i$-zCDP (resp.~$(\rho_i, \omega)$-tCDP). Then $\COMP(\calM_1\dots\calM_k)$ is $(\sum_i{\eta_i})$-zCDP (resp.~$(\sum_i \rho_i, \omega)$-tCDP).
\end{corollary}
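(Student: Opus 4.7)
The plan is to reduce the corollary directly to Theorem~\ref{theo:renyi} via the standard characterization of zCDP and tCDP in terms of a family of R\'enyi divergence bounds. Recall that a mechanism $\calM$ is $\eta$-zCDP iff, on every pair of neighboring inputs, $D_{\alpha}(\calM^d \| \calM^{d'}) \le \eta \alpha$ for all $\alpha > 1$; and $(\rho, \omega)$-tCDP iff the same bound $D_\alpha(\calM^d\|\calM^{d'}) \le \rho\alpha$ holds for all $\alpha \in (1, \omega]$. Once we extend these notions to interactive systems in the natural way (quantifying over adversaries $\calA$ as in Definition~\ref{def:renyi}), each $\eta_i$-zCDP mechanism $\calM_i$ is automatically $(\alpha, \eta_i \alpha)$-RDP for every fixed $\alpha > 1$, and likewise for tCDP in the truncated range.

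First, I would fix an arbitrary order $\alpha$ in the relevant range. Since each $\calM_i$ is $(\alpha, \eta_i \alpha)$-RDP (respectively $(\alpha, \rho_i \alpha)$-RDP) at this $\alpha$, Theorem~\ref{theo:renyi} applies and yields that $\COMP(\calM_1 \dots \calM_k)$ is $(\alpha, (\sum_i \eta_i)\alpha)$-RDP (respectively $(\alpha, (\sum_i \rho_i)\alpha)$-RDP). Then I would let $\alpha$ range over $(1,\infty)$ (or $(1, \omega]$ for tCDP) and conclude by rewinding the characterization: the concurrent composition satisfies the defining inequality of $(\sum_i \eta_i)$-zCDP (respectively $(\sum_i \rho_i, \omega)$-tCDP).

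There is essentially no technical obstacle once Theorem~\ref{theo:renyi} is in hand; the only point that merits a short sentence is the order of quantifiers. The definition of zCDP for interactive systems must require the R\'enyi bound to hold adversary-by-adversary but uniformly over $\alpha$, and this is exactly what our per-$\alpha$ application of Theorem~\ref{theo:renyi} delivers, because Theorem~\ref{theo:renyi} produces the RDP bound for \emph{every} adversary $\calA$ at each given $\alpha$. I would place the full write-up in Appendix~\ref{append:concentrate}, using at most a few lines per DP variant, and note explicitly that no additional loss is incurred in moving from the sequential to the concurrent setting, since the reduction passes through Theorem~\ref{theo:renyi} in a black-box fashion.
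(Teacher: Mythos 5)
Your proposal is correct and is essentially identical to the paper's own proof in Appendix~\ref{append:concentrate}: fix an order $\alpha$, note that each $\calM_i$ is $(\alpha,\alpha\eta_i)$-RDP (resp.\ $(\alpha,\alpha\rho_i)$-RDP) by the definition of zCDP (resp.\ tCDP), apply Theorem~\ref{theo:renyi} at that $\alpha$, and then quantify over all admissible $\alpha$ to recover the zCDP/tCDP conclusion. The only cosmetic difference is that the paper's tCDP definition ranges over the open interval $\alpha\in(1,\omega)$ rather than $(1,\omega]$, which does not affect the argument.
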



Theorem~\ref{theo:approximate}, \ref{theo:renyi} and Corollary~\ref{coro:CDP} provide compelling evidence that the adversary gains no advantage by interleaving its queries to independently running mechanisms. Consequently, interactivity can be viewed as a feature that differential privacy grants us for free.

\noindent\textbf{Concurrent and independent work.} Concurrently to our work, a recent work by \citet{VadhanZ2022} proves an optimal concurrent composition theorem for $f$-DP \citep{Dong2022GaussianDP}. By the standard connection, their result implies the optimal concurrent composition theorem for approximate DP. However, our techniques are very different than theirs. Their result is stronger, as it is known that approxiamte-DP can be seen as a special case of $f$-DP \citep{Dong2022GaussianDP}. However, our proof for approximate DP is more elementary: we do not need to work through $f$-DP as their proof does. Furthermore, our proof comes with several interesting technical ingredients that might be of independent interests. This includes a structural result for interactive mechanisms (Theorem~\ref{theo:decomposition}), as well as a dual perspective to reason about R\'enyi divergences (Lemma~\ref{lemma:renyi-charact}).

\section{Implications of Our Results}

In this section, we discuss implications of our results, and demonstrate how they offer more than the sequential composition theorems.

\noindent\textbf{Designing new algorithms.} The optimal concurrent composition theorem makes it possible to design new differentially private algorithm that involves running several building blocks concurrently. As one motivating example, consider the Sparse Vector Technique (SVT). The standard SVT (as in \citet{DBLP:journals/fttcs/DworkR14}) and its variants have been studied extensively in the literature. In particular, it was observed by \cite{DBLP:journals/pvldb/LyuSL17, DBLP:conf/nips/ZhuW20} that one can add noise to the threshold only \emph{once}, and then use the noisy threshold to answer $c > 1$ ``meaningful'' queries (namely, after reporting each meaningful query, the SVT algorithm does NOT refresh the noisy threshold). It was argued in \citep{DBLP:journals/pvldb/LyuSL17,DBLP:conf/nips/ZhuW20} that this variant of SVT can offer a higher accuracy while consuming the same amount of privacy budget, both theoretically and empirically.

However, this variant of SVT has received relatively less attention in literature. One reason might be that it is unclear what happens if we compose this SVT with other mechanisms. In particular, the standard SVT refreshes its threshold after answering each ``meaningful'' query, which allows one to decompose the algorithm into $c$ pieces of smaller SVT algorithms, and then compose with other mechanisms via the sequential computation. In contrast, the variants by \cite{DBLP:journals/pvldb/LyuSL17, DBLP:conf/nips/ZhuW20} work by answering each ``meaningful query'' using the \emph{same} noisy threshold, which do not seem to admit such a decomposition. This makes this variant less appealing: in most applications, people want to use SVT as a supporting subroutine for other algorithms. Therefore, it is crucial to understand the (concurrent) composition behavior of SVT with other mechanisms.

Now, with the new concurrent composition theorem, we can plug this variant of SVT in any algorithm, and argue the privacy guarantee of the whole computation by black-box applying Theorems~\ref{theo:approximate} and \ref{theo:renyi} (depending on whether we are working with $(\eps,\delta)$-DP or RDP). To illustrate the idea, in Appendix~\ref{append:example}, we apply Theorem~\ref{theo:approximate} to analyze a simple algorithm: private ``Guess-and-Check'' with the aforementioned variant of SVT as a subroutine. We hope our example can motivate people to design more powerful algorithms by concurrently composing simple building blocks.

\noindent\textbf{Practical Implication.} Besides the theoretical interests, our theorem has implications for practical deployments of interactive DP mechanisms. For one example, suppose there is a data center that holds the private information of individuals and offers data analysts access to the database (interactively and differentially-privately). Without knowing the concurrent composition theorem, it might be possible that some $k>1$ analysts can collude by coordinating their (interactive) queries to the database and extracting much more sensitive information. Our result refutes the possibility of such an attack. In particular, suppose each data analyst has only an $(\eps,\delta)$-DP amount of privacy ``quota''. Then, even if they collude and spend their privacy budget in whatever way, their computation result is still $(O(\sqrt{k\log(1/\delta')}\eps),k\delta+\delta')$-DP with respect to the private database.

\section{Proof of Main Results}\label{sec:proof}

In this section, we show the proof of our results. We start with a very brief proof overview. We prove Theorem~\ref{theo:approximate} by a reduction to the sequential composition of $k$ (approximate) randomized response mechanisms. This generalizes the idea developed by \citet{DBLP:conf/tcc/VadhanW21}. To prove Theorem~\ref{theo:renyi}, we take a completely different approach, and our technique offers new tools to analyze R\'enyi DP. Namely, we propose an alternative characterization of R\'enyi divergence (Lemma~\ref{lemma:renyi-charact}), which allows for a fine-grained account of the privacy loss in the complex interaction involving multiple mechanisms. The characterization of R\'enyi divergence might find itself useful in other applications.

\noindent\textbf{Notation.} Let $P, Q$ be two distributions supported over $X$. For a real $\eta > 0$, we say that $P \ge \eta Q$, if for every $S\subseteq X$, it holds that 
\[
\Pr_{x\sim P}[x\in S] \ge \eta \Pr_{x\sim Q}[x\in S].
\]
Furthermore, we say $P\equiv Q$, if $P$ and $Q$ are identically distributed.

\subsection{Approximate Differential Privacy}\label{sec:proof-approximate}


To prove Theorem~\ref{theo:approximate}, we follow the approach by \citet{DBLP:conf/tcc/VadhanW21}, where they showed that one can simulate two $(\eps, 0)$-indistinguishable interactive systems by post-processing a randomized response mechanism. This simulation enables them to reduce the concurrent composition to a sequential composition, and the optimal composition theorem follows. It was asked as an open question in \citet{DBLP:conf/tcc/VadhanW21} whether the same simulation can be carried out for approximate DP. We answer this question affirmatively.

\noindent\textbf{Review of the Vadhan-Wang approach.} It would be instructive to review the proof by \citet{DBLP:conf/tcc/VadhanW21} first. Let $\calM^0,\calM^1$ be the pair of systems by running the private mechanism on a pair of neighboring data sets. The adversary $\calA$ interacts with $\calM^b$ for some $b\in \{0,1\}$ and wants to find out the value of $b$. An intuitive yet delicate fact due to \citet{DBLP:conf/tcc/VadhanW21} is that, if $\calM^0$ and $\calM^1$ are $(\eps, 0)$-indistinguishable, then there exist two systems $\calN^0,\calN^1$ such that, for every adversary $\calA$, the distribution of $\IT(\calA: \calM^{b})$ is identical to $\frac{e^\eps}{1+\eps} \IT(\calA : \calN^{b}) + \frac{1}{1+e^{\eps}} \IT(\calA:\calN^{1-b})$. This enables one to simulate the many-round interaction between $\calA$ and $\calM^b$ by running a one-round randomized response mechanism.

In more detail, let $\RR^b_{\eps}$ denote the standard randomized response mechanism, defined as follows. $\RR^b_{\eps}$ only accepts one query. On the query, $\RR^b_\eps$ ignores the query message, returns $b$ with probability $\frac{e^\eps}{1+e^{\eps}}$, and returns $1-b$ otherwise. We modify $\calA$ to a new adversary $\calA'$: $\calA'$ first sends a query to $\RR^b_{\eps}$ and receives a bit $b'$. Then $\calA'$ simulates the interaction between $\calA$ and $\calN^{b'}$, and outputs the transcript (i.e., $\IT(\calA : \calN^{b'})$). Let $\Output(\calA': \RR^b_{\eps})$ denote the output distribution of $\calA'$ when interacting with $\RR^b_{\eps}$. It is clear that
\[
\Output(\calA' : \RR^b_{\eps}) \equiv \frac{e^\eps}{1+\eps} \IT(\calA:\calN^{b}) + \frac{1}{1+e^{\eps}} \IT(\calA:\calN^{1-b}) \equiv  \IT(\calA : \calM^b).
\]
Therefore, $\calA'$ simulates the interaction between $\calA, \calM^b$ faithfully, by a single query to $\RR_{\eps}^b$.

Now, suppose the adversary $\calA$ is interacting with $k$ mechanisms $\calM^b_1,\dots, \calM^b_k$ in parallel. For each $i\in [k]$, assuming that $\calM^0_i$ and $\calM^1_i$ are $(\eps_i, 0)$-indistinguishable, there is a decomposition of $\calM^0_i, \calM^1_i$ by some $\calN^0_i$ and $\calN^1_i$. Again, we modify $\calA$ to a new mechanism $\calA'$. $\calA'$ first queries $\RR^b_{\eps_i}, i\in [k]$ in order, and receives $k$ bits $b'_1,\dots, b'_k$. Then $\calA'$ simulates the interaction between $\calA$ and $(\calN^{b'_i}_{i})_{i\in [k]}$ and outputs the transcript. One can show that
\begin{align}
\Output(\calA' : \RR^b_{\eps_1}, \dots, \RR^b_{\eps_k}) \equiv \IT(\calA : \calM^b_1,\dots, \calM^b_k). \label{eq:simulate-1}
\end{align}
Note that the left hand side of \eqref{eq:simulate-1} can be simulated by a sequential composition of $k$ randomized response mechanisms. Invoking the optimal composition theorem for sequential composition \citep{DBLP:conf/icml/KairouzOV15, DBLP:journals/toc/MurtaghV18} concludes the proof.

\noindent\textbf{Extension to approximate DP.} Now, if $\calM^0$ and $\calM^1$ are $(\eps,\delta)$-indistinguishable with $\delta > 0$, there might not be a nice decomposition of $\calM^b$ into $\frac{e^\eps}{1+e^{\eps}} \calN^b + \frac{1}{1+e^{\eps}}\calN^{1-b}$. Still, it is plausible to conjecture that there is a decomposition of $\calM^0, \calM^1$ with four systems $\calN^0,\calN^1,\calE^0,\calE^1$ such that for each $b\in \{0,1\}$,
\begin{align}
    \calM^b = \delta \calE^b + (1-\delta) \left(  \frac{e^\eps}{1+e^{\eps}} \calN^b + \frac{1}{1+e^{\eps}} \calN^{1-b} \right). \label{eq:conj-decompose}
\end{align}

Our main technical result in this subsection proves the existence of such a decomposition.

\begin{theorem}\label{theo:decomposition}
Two systems $\calM^0,\calM^1$ are $(\eps,\delta)$-indistinguishable, if and only if there are four systems $\calN^0,\calN^1,\calE^0,\calE^1$ satisfying the following: for every adversary $\calA$ and $b\in \{0,1\}$, it holds that
\begin{align}
\IT(\calA : \calM^b) \equiv \delta \IT(\calA :\calE^b) + (1-\delta) \left( \frac{e^\eps}{1+\eps} \IT(\calA:\calN^{b}) + \frac{1}{1+e^{\eps}} \IT(\calA:\calN^{1-b})  \right). \label{eq:conj-decompose-2}
\end{align}
\end{theorem}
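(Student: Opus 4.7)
The ``if'' direction is by direct calculation. If \eqref{eq:conj-decompose-2} holds, then for every adversary $\calA$ and event $S$,
\[
\Pr[\IT(\calA : \calM^b) \in S] \le \delta + (1-\delta)\bigl(\tfrac{e^\eps}{1+e^\eps}\Pr[\IT(\calA : \calN^b)\in S] + \tfrac{1}{1+e^\eps}\Pr[\IT(\calA : \calN^{1-b})\in S]\bigr).
\]
Applying the elementary inequality $\tfrac{e^\eps}{1+e^\eps} x + \tfrac{1}{1+e^\eps} y \le e^\eps\bigl(\tfrac{1}{1+e^\eps} x + \tfrac{e^\eps}{1+e^\eps} y\bigr)$ for $x,y\ge 0$ and then using \eqref{eq:conj-decompose-2} with $b$ replaced by $1-b$, one concludes $\Pr[\IT(\calA : \calM^b)\in S] \le e^\eps \Pr[\IT(\calA : \calM^{1-b})\in S]+\delta$, as required.

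For the nontrivial direction, my plan is to construct the decomposition at the level of \emph{transcript response functions}. Given any interactive system $\calM$, define $\calM(\tau) := \prod_{t=1}^{T} \Pr[\calM \text{ returns } y_t \mid (x_i,y_i)_{i<t}, x_t]$ for each $\tau = (x_1,y_1,\ldots,x_T,y_T)$. Such a function satisfies the \emph{marginalization identity} $\sum_{y \in \calY} \calM(h,x,y) = \calM(h)$ for every $(h,x)$, and conversely any non-negative map on $(\calX\times\calY)^*$ with $\calM(\emptyset)=1$ and this identity is the transcript response function of a unique interactive system. Moreover $\Pr[\IT(\calA:\calM)=\tau] = \sigma_\calA(\tau)\cdot \calM(\tau)$ where $\sigma_\calA$ depends only on the adversary, so $(\eps,\delta)$-indistinguishability of $\calM^0,\calM^1$ is equivalent (via deterministic adversaries) to a \emph{branch-level} bound: for every decision-tree adversary $\calA$ with reachable transcript set $T_\calA$ and every $S\subseteq T_\calA$, $\sum_{\tau\in S}\calM^0(\tau)\le e^\eps\sum_{\tau\in S}\calM^1(\tau)+\delta$.

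The strategy is now to find valid transcript functions $\calE^0,\calE^1$; the systems $\calN^0,\calN^1$ are then forced by the linear equation
\[
\calN^b(\tau) = \frac{e^\eps(\calM^b(\tau)-\delta\,\calE^b(\tau)) - (\calM^{1-b}(\tau)-\delta\,\calE^{1-b}(\tau))}{(1-\delta)(e^\eps-1)}.
\]
The key observation here is that the marginalization identity is preserved under linear combinations of valid transcript functions, so $\calN^b$ is automatically a valid transcript function as soon as $\calE^b$ are. The only remaining requirement is non-negativity of $\calE^b$ and $\calN^b$, which reduces to the pointwise bound $\delta(e^\eps \calE^b(\tau) - \calE^{1-b}(\tau)) \le e^\eps\calM^b(\tau)-\calM^{1-b}(\tau)$ for all $\tau,b$. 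My plan is to build $\calE^b$ inductively on $|\tau|$, at each new history $(h,x)$ applying the standard non-interactive $(\eps,\delta)$-DP decomposition of~\citet{DBLP:conf/icml/KairouzOV15, DBLP:journals/toc/MurtaghV18} locally to the weighted next-response pair $\calM^0(h,x,\cdot),\calM^1(h,x,\cdot)$.

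The main technical obstacle is that the per-step conditional distributions $\calM^b(\cdot\mid h,x)$ need \emph{not} be $(\eps,\delta)$-indistinguishable even when $\calM^0,\calM^1$ are, so a naive per-node application of the non-interactive decomposition can fail to respect the global $\delta$-budget. The branch-level characterization is precisely what rescues the construction: it bounds the total excess $\sum_{\tau\in T_\calA}\max(\calM^0(\tau) - e^\eps \calM^1(\tau),0)$ along any single deterministic branch by $\delta$, which in turn caps how much $\calE^b$-mass may be committed along that branch. Making the allocation consistent across \emph{all} branches simultaneously, while preserving the marginalization identity so that a single transcript function $\calE^b$ absorbs the excess everywhere, is the technical heart of the argument and is what distinguishes this from the pure-DP case treated by Vadhan--Wang.
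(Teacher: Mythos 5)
Your scaffolding is correct as far as it goes, and it largely mirrors the paper's: the ``if'' direction works; the transcript-function formalism with the factorization $\Pr[\IT(\calA:\calM)=\tau]=\sigma_\calA(\tau)\cdot\calM(\tau)$ and the reduction to deterministic (branch-level) adversaries are exactly how the paper's proof is set up; and your explicit formula $\calN^b=\bigl(e^\eps(\calM^b-\delta\calE^b)-(\calM^{1-b}-\delta\calE^{1-b})\bigr)/\bigl((1-\delta)(e^\eps-1)\bigr)$ is a clean way of collapsing Lemma~\ref{lemma:system-subtraction} and Lemma~\ref{lemma:pure-decompose} into one step. Your pointwise feasibility condition $\delta\bigl(e^\eps\calE^b(\tau)-\calE^{1-b}(\tau)\bigr)\le e^\eps\calM^b(\tau)-\calM^{1-b}(\tau)$ is precisely the paper's requirement \eqref{eq:goal-2}, and, as the paper also observes, it subsumes $\calM^b\ge\delta\calE^b$ because $e^\eps>1$.

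The genuine gap is that the heart of the theorem --- actually constructing $\calE^0,\calE^1$ that are non-negative, satisfy the marginalization identity $\sum_y\calE^b(h,x,y)=\calE^b(h)$ at every node and for every continuation query $x$, and obey the pointwise bound at the leaves --- is never carried out. Your stated plan (apply the non-interactive $(\eps,\delta)$ decomposition node by node to the conditionals $\calM^b(\cdot\mid h,x)$) is one you yourself concede can fail, since those conditionals need not be $(\eps,\delta)$-indistinguishable and there is no per-node $\delta$-budget; and the branch-level bound you invoke does not by itself produce a consistent allocation, because the mass $\calE^b(h)$ committed at an internal node must cover the worst-case excess over \emph{every} adaptively chosen continuation $x$ simultaneously, while the constraints couple $\calE^0$ and $\calE^1$ through the two-sided inequality. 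The paper resolves exactly this in Lemma~\ref{lemma:decomposition}: it defines a look-ahead control function $\Lower^b$ by a backward sum--max recursion over future queries, shows $\Lower^b(\emptyset,x_1)\le\delta$ by realizing that recursion with a greedy deterministic adversary (Claim~\ref{claim:start-good}), proves the compatibility inequality $\Lower^b\le\Upper^b+e^{-\eps}\Lower^{1-b}$ by downward induction (Claim~\ref{claim:up-good}), and then builds $E^b$ forward in $t$, seeding each child at the lower control and closing the remaining deficits by jointly raising $\widetilde{E}^0,\widetilde{E}^1$, where the tightness argument yielding $\mathrm{Gap}_0\le e^{-\eps}\mathrm{Gap}_1$ and $\mathrm{Gap}_1\le e^{-\eps}\mathrm{Gap}_0$ forces both gaps to zero. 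Some argument of this kind --- a global look-ahead budget together with a feasibility proof for the per-node allocation --- is indispensable; labeling it ``the technical heart'' without supplying it leaves the proposal short of a proof of the nontrivial direction.
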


Theorem~\ref{theo:decomposition} implies Theorem~\ref{theo:approximate} by a similar reduction to (approximate) random response. For completeness, we include a proof in Appendix~\ref{append:approxiate}. 

We prove Theorem~\ref{theo:decomposition} by establishing a series of lemmas. In the following, we state these lemmas and explain their intuition. We defer the formal proof to Appendix~\ref{append:approxiate}.

\begin{lemma}\label{lemma:decomposition}
Suppose $\calM^0,\calM^1$ are $(\eps,\delta)$-indistinguishable. There are two systems $\calE^0,\calE^1$ satisfying the following.
\begin{itemize}
    \item For every adversary $\calA$ and $b\in \{0,1\}$, it holds that $\IT(\calA : \calM^b) \ge \delta\cdot  \IT(\calA : \calE^b)$.
    \item For every adversary $\calA$, every set of transcripts $S\subseteq \{(x_i,y_i)_{i\in [T]}\}$ and $b\in \{0,1\}$, it holds that
    \[
    \begin{aligned}
    &  ~~~~ \Pr[\IT(\calA : \calM^b)\in S] - \delta \Pr[\IT(\calA:\calE^b)\in S]  \\
    & \le e^\eps \left( \Pr[\IT(\calA : \calM^{1-b} ) \in S] - \delta \Pr[\IT(\calA : \calE^{1-b})\in S] \right).
    \end{aligned}
    \]
\end{itemize}
\end{lemma}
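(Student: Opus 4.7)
The plan is to establish a slightly stronger decomposition, namely that there exist interactive systems $\calE^0, \calE^1, \calN^0, \calN^1$ with $\calM^b = \delta \calE^b + (1-\delta)\calN^b$ at the level of systems, and $\calN^0, \calN^1$ being $(\eps, 0)$-indistinguishable. This immediately implies Lemma~\ref{lemma:decomposition}: by linearity of the transcript distribution under any adversary, $\IT(\calA:\calM^b) - \delta\, \IT(\calA:\calE^b) = (1-\delta)\, \IT(\calA:\calN^b)$, and the pointwise $e^\eps$-domination between the residuals then follows from the pure $\eps$-DP guarantee of $\calN^0, \calN^1$.

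The key technical ingredient is the non-interactive analog: for two distributions $P^0, P^1$ on a finite set with $P^b(S) \le e^\eps P^{1-b}(S) + \delta$ for every $S$ and $b$, there exist sub-distributions $F^0, F^1$ of total mass $\delta$ with $F^b \le P^b$ pointwise such that the residuals $P^b - F^b$ satisfy the pointwise inequality $P^b - F^b \le e^\eps (P^{1-b} - F^{1-b})$. The construction peels off the ``bad'' mass on which $P^b$ exceeds $e^\eps P^{1-b}$ and tops the remainder up to exactly $\delta$ while preserving the residual inequality. My plan is to lift this to interactive systems by induction on the interaction depth $T$: at each step apply the non-interactive lemma to the first-response distributions $\calM^b(\cdot\mid x)$ for every query $x$, producing first-step kernels $\calE^b(\cdot\mid x)$ and $\calN^b(\cdot\mid x)$, and then recurse on the continuation subsystems $\calM^b_{x,y}$ to populate the deeper layers.

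The main obstacle is that a purely local, per-step extraction does not respect the global $\delta$-budget: a ``bad'' event may manifest only several rounds into the interaction, and the continuation subsystems $\calM^b_{x,y}$ are generally \emph{not} themselves $(\eps,\delta)$-indistinguishable, so the inductive hypothesis cannot be applied in a black-box manner. To handle this I expect the formal proof to strengthen the inductive invariant by tracking, for each history $h$, a localized DP-budget consistent with the top-level $(\eps,\delta)$ guarantee, so that propagation through the tree reduces each step to a single use of the non-interactive decomposition. An alternative I would also explore is to view the entire transcript (under a ``universal'' deterministic adversary enumerating all queries) as a single non-interactive object, apply the decomposition once globally, and then verify that the resulting bad component factors through a well-defined interactive $\calE^b$ whose conditional kernels are adversary-independent because they are determined by pointwise likelihood-ratio comparisons of $\calM^0$ and $\calM^1$ on the interaction tree---a property of the systems alone. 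The hardest part will be establishing this consistency across adversaries while simultaneously ensuring that the constructed $\calN^0, \calN^1$ are pure $\eps$-indistinguishable.
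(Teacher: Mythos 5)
You have correctly identified the reduction (build $\calE^b$ so that $\calM^b=\delta\calE^b+(1-\delta)\calN^b$ with $\calN^0,\calN^1$ pure-$\eps$-indistinguishable), the non-interactive base case, and the central obstacle: the $\delta$-budget is global, bad events can surface only deep in the interaction, and continuation subsystems need not be $(\eps,\delta)$-indistinguishable. But the proposal stops exactly where the real work begins; the mechanism that resolves that obstacle is asserted to exist (``I expect the formal proof to strengthen the inductive invariant\dots'') rather than supplied. Concretely, three ingredients are missing. First, you need an adversary-independent, per-history budget function; the paper defines $\Lower^b((x_i,y_i)_{i<t},x_t)$ by a \emph{backward} recursion that sums over the next response and takes a maximum over the adversary's next query, which is precisely what makes the eventual $\calE^b$ a property of the systems alone. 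Second, you must show the root budget is affordable, i.e.\ $\Lower^b(\emptyset,x_1)\le\delta$; this is not automatic and is proved by exhibiting a greedy deterministic adversary that chases the maximizing query at every history and invoking $(\eps,\delta)$-indistinguishability against that single adversary. Your sketch never produces such an adversary, and without it there is no reason the sum of per-branch bad masses (each taken worst-case over future queries) stays below $\delta$. Third, the forward construction of $\calE^0,\calE^1$ must simultaneously (i) dominate the $\Lower$ budgets, (ii) normalize correctly so each conditional kernel sums to the parent mass, and (iii) keep the residuals $e^\eps$-comparable. The paper does this by a coupled ``top-up'' of $\calE^0$ and $\calE^1$ against the constraint $\delta E^b\le \Upper^b+e^{-\eps}\delta E^{1-b}$, and closes the argument with the inequality pair $\mathrm{Gap}_0\le e^{-\eps}\mathrm{Gap}_1$, $\mathrm{Gap}_1\le e^{-\eps}\mathrm{Gap}_0$ forcing both gaps to zero. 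Your per-$b$ ``peel off and top up'' from the non-interactive case does not lift as stated, because the slack available to one system at a node is constrained by the other system's values and by the subtree budgets below it.

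Your alternative route (apply the non-interactive decomposition once, globally, under a ``universal'' deterministic adversary) does not work as described: adaptivity means there is no single adversary whose transcript distribution majorizes the budget needed for all adversaries, and the bad mass required at a given prefix depends on which future queries are asked. That is exactly why the paper's $\Lower$ takes a max over the next query at every node, and why Claim~\ref{claim:start-good} is nontrivial --- the sum-of-maxes structure is realizable by one adaptive adversary only because the max is taken conditionally on the history, a point your sketch would need to make explicit. As it stands, the proposal is a correct statement of the target and the difficulty, but not a proof.
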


Roughly, Lemma~\ref{lemma:decomposition} says that there are two systems $\calE^0,\calE^1$ that capture the low-probability ``bad behavior'' of $\calM^0,\calM^1$. It is the primary technical contribution of this subsection. We prove Lemma~\ref{lemma:decomposition} by explicitly constructing the two systems $\calE^0,\calE^1$. That is, we specify the probability density functions $\Pr[\calE^b((x_j,y_j)_{j<i}, x_i) = y_i]$ for $\calE^0,\calE^1$ step by step, in the increasing order of $i = 1,2,\dots, T$.



\begin{lemma}\label{lemma:system-subtraction}
Suppose $\calM, \calE$ are two systems such that for every adversary $\calA$, it holds that $\IT(\calA :\calM) \ge \delta \IT(\calA : \calE)$. Then there is a system $\calN$ such that for every adversary $\calA$, it holds that
\[
\IT(\calA : \calM) \equiv \delta \IT(\calA : \calE)  + (1-\delta) \IT(\calA : \calN).
\]
\end{lemma}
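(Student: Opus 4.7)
The plan is to construct the system $\calN$ explicitly, round by round, by demanding that the required identity already holds at the level of joint probabilities of every transcript prefix. For a partial transcript $t_i = (x_1, y_1, \ldots, x_i, y_i)$, set
\[
f_{\calM}(t_i) := \prod_{j \le i} \Pr[\calM((x_\ell, y_\ell)_{\ell < j}, x_j) = y_j],
\]
and define $f_{\calE}(t_i), f_{\calN}(t_i)$ analogously. The target identity $f_{\calM}(t_i) = \delta f_{\calE}(t_i) + (1-\delta) f_{\calN}(t_i)$ at every prefix forces me to set, for each history $(x_\ell, y_\ell)_{\ell < i}$ and every query-response pair $(x_i, y_i)$,
\[
\Pr[\calN((x_\ell, y_\ell)_{\ell < i}, x_i) = y_i] := \frac{f_{\calM}(t_i) - \delta f_{\calE}(t_i)}{f_{\calM}(t_{i-1}) - \delta f_{\calE}(t_{i-1})}
\]
whenever the denominator is positive, and arbitrarily otherwise.

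Next I verify that this assignment is a valid conditional distribution. Nonnegativity of the numerator reduces to $f_{\calM}(t_i) \ge \delta f_{\calE}(t_i)$, which I obtain by applying the hypothesis $\IT(\calA : \calM) \ge \delta \IT(\calA : \calE)$ to the deterministic adversary that walks the fixed path $x_1, \ldots, x_i$ regardless of the responses it sees and then halts, evaluated at the singleton set $\{t_i\}$: the adversary's query probabilities equal $1$ along the path and cancel on both sides, leaving precisely $f_{\calM}(t_i) \ge \delta f_{\calE}(t_i)$. Summation to $1$ follows because $\sum_{y_i}(f_{\calM}(t_i) - \delta f_{\calE}(t_i)) = f_{\calM}(t_{i-1}) - \delta f_{\calE}(t_{i-1})$, using that $\Pr[\calM((x_\ell,y_\ell)_{\ell<i}, x_i) = \cdot]$ sums to $1$ over $y_i$, and likewise for $\calE$. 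In the degenerate case where the denominator vanishes, the inequality $f_{\calM}(t_i) \ge \delta f_{\calE}(t_i)$ combined with $\sum_{y_i}(f_{\calM}(t_i) - \delta f_{\calE}(t_i)) = 0$ forces every numerator to vanish as well, so $f_{\calN}(t_i) = (f_{\calM}(t_i) - \delta f_{\calE}(t_i))/(1-\delta) = 0$ holds at such a prefix no matter how $\calN$ is defined there.

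With the prefix identity $f_{\calM}(t_i) = \delta f_{\calE}(t_i) + (1-\delta) f_{\calN}(t_i)$ in hand (by an easy induction using $f_{\calN}(t_i) = f_{\calN}(t_{i-1}) \cdot \Pr[\calN(\cdot) = y_i]$), the distributional identity for arbitrary $\calA$ follows cleanly. For any complete transcript $t = (x_1, y_1, \ldots, x_T, y_T)$ one has $\Pr[\IT(\calA : \calM) = t] = \alpha_{\calA}(t) \cdot f_{\calM}(t)$, where $\alpha_{\calA}(t)$ is the product of $\calA$'s conditional query probabilities along $t$; the same factor $\alpha_{\calA}(t)$ appears for $\calE$ and $\calN$ since the adversary is fixed. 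Plugging in the prefix identity at $i = T$ and summing over any set $S$ of transcripts yields the claimed decomposition. The main subtlety I anticipate is the degenerate case highlighted above, where one must notice that nonnegativity paired with a tight sum forces termwise equality; the rest is careful bookkeeping that exploits the fact that the seemingly global hypothesis is really a statement about joint probabilities along deterministic paths.
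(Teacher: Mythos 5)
Your proposal is correct and follows essentially the same route as the paper: you define $\calN$ via the quantity $f_{\calM}-\delta f_{\calE}$ (normalized by $1-\delta$), check nonnegativity and the telescoping/consistency condition over $y_i$, and conclude by factoring out the adversary's query probabilities from the transcript distribution. The only differences are presentational — you specify $\calN$ by conditional probabilities rather than by the joint product form and spell out the degenerate zero-denominator case, which the paper's proof leaves implicit.
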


For intuition, suppose $P, Q$ are two distributions such that $P \ge \delta Q$. Then one can easily find a distribution $Q'$ such that $P \equiv \delta Q + (1-\delta) Q'$. The proof of Lemma~\ref{lemma:system-subtraction} extends this simple idea.


\begin{lemma}[\citet{DBLP:conf/tcc/VadhanW21}]\label{lemma:pure-decompose}
Suppose $\calN^0,\calN^1$ are $(\eps, 0)$-indistinguishable. Then there are two systems $\calN^{0'},\calN^{1'}$ such that for every adversary $\calA$, it holds that
\[
\IT(\calA : \calN^b) \equiv \frac{e^\eps}{1+e^\eps} \IT(\calA : \calN^{b'}) + \frac{1}{1 + e^\eps} \IT(\calA : \calN^{(1-b)'}).
\]
\end{lemma}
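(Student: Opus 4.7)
The plan is to construct $\calN^{0'}, \calN^{1'}$ by first prescribing their transcript probability functions via a linear combination of those for $\calN^0, \calN^1$, and then recovering the systems themselves via conditional probabilities. The single-step prototype motivates everything: if $P^0, P^1$ are distributions on $\calY$ with $P^b \le e^\eps P^{1-b}$ pointwise, then $P^{b'}(y):=(e^\eps P^b(y)-P^{1-b}(y))/(e^\eps-1)$ are valid distributions obeying $P^b=\tfrac{e^\eps}{e^\eps+1}P^{b'}+\tfrac{1}{e^\eps+1}P^{(1-b)'}$; the goal is to lift this from one round of interaction to entire transcripts.

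Concretely, for any system $\calN$ and any (partial) transcript $\tau=(x_1,y_1,\ldots,x_T,y_T)$, I would set
\[
\pi^\calN(\tau) := \prod_{i \le T} \Pr[\calN(\tau_{<i}, x_i) = y_i],
\]
so that for every adversary $\calA$ the joint probability factors as $\Pr[\IT(\calA:\calN) = \tau] = r_\calA(\tau) \cdot \pi^\calN(\tau)$, where $r_\calA$ depends only on $\calA$ and $\tau$, not on $\calN$. A singleton-set argument (taking $S = \{\tau\}$ and the deterministic adversary that replays $\tau$'s queries) shows that $(\eps,0)$-indistinguishability of $\calN^0,\calN^1$ is equivalent to $\pi^{\calN^b}(\tau) \le e^\eps \pi^{\calN^{1-b}}(\tau)$ for every $\tau$ and $b$. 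I would then \emph{define}
\[
\pi^{b'}(\tau) := \frac{e^\eps \pi^{\calN^b}(\tau) - \pi^{\calN^{1-b}}(\tau)}{e^\eps - 1}.
\]
This is non-negative by the previous bound, inherits by linearity the marginal-consistency identity $\sum_y \pi^{b'}(\tau,x,y) = \pi^{b'}(\tau)$ from $\pi^{\calN^0}$ and $\pi^{\calN^1}$, and by construction satisfies the mixture $\pi^{\calN^b} = \tfrac{e^\eps}{e^\eps+1}\pi^{b'} + \tfrac{1}{e^\eps+1}\pi^{(1-b)'}$.

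The final step recovers $\calN^{b'}$ from $\pi^{b'}$ by setting $\Pr[\calN^{b'}(h,x) = y] := \pi^{b'}(h,x,y)/\pi^{b'}(h)$ whenever the denominator is positive and arbitrarily otherwise (such histories occur with probability zero and do not affect the transcript distribution). Marginal consistency guarantees these are genuine conditional distributions on $\calY$, and a short telescoping computation confirms that the resulting system's own transcript probability function equals $\pi^{b'}$. Multiplying the transcript-level mixture identity through by $r_\calA(\tau)$ then yields the lemma for every deterministic $\calA$, and the randomized case follows by averaging over $\calA$'s internal coins.

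The main obstacle is the second-to-last step: showing that the algebraically-defined $\pi^{b'}$ really arises as the transcript probability function of an honest interactive system in the sense of Definition~\ref{def:system}. Marginal consistency is exactly what makes this work, and it must be tracked carefully on partial transcripts ending in an $x$ as well as those ending in a $y$; once in hand, the telescoping product of conditionals precisely reproduces $\pi^{b'}$ and the rest of the argument is pure linear algebra on the non-negative cone of transcript probabilities.
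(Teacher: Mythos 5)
Your construction is correct, and it is essentially the argument this lemma rests on: the paper itself does not reprove it (it is imported from \citet{DBLP:conf/tcc/VadhanW21}), but your proof mirrors exactly the technique the paper uses for its own Lemma~\ref{lemma:system-subtraction} — work at the level of transcript probability functions $\pi^{\calN}(\tau)=\prod_i \Pr[\calN(\tau_{<i},x_i)=y_i]$, take the algebraic combination $\pi^{b'}=(e^{\eps}\pi^{\calN^b}-\pi^{\calN^{1-b}})/(e^{\eps}-1)$, check non-negativity (from the singleton/deterministic-replay characterization of $(\eps,0)$-indistinguishability) and marginal consistency (by linearity), and then recover a genuine interactive system by taking ratios of consecutive transcript probabilities, with the $\pi^{b'}(h)=0$ histories handled arbitrarily since $\pi^{b'}$ of any extension is then also zero. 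The only loose end is $\eps=0$, where your denominator vanishes; there the hypothesis forces $\calN^0$ and $\calN^1$ to induce identical transcript distributions, so one may take $\calN^{0'}=\calN^{1'}=\calN^0$ (or argue by continuity, as the paper does for its main decomposition lemma).
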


\noindent\textbf{Wrap-up.} We can conclude the proof for Theorem~\ref{theo:decomposition} now. The ``if'' direction is obvious: the existence of a decomposition satisfying \eqref{eq:conj-decompose-2} implies that $\calM^0,\calM^1$ are $(\eps,\delta)$-indistinguishable. For the other direction, we start by constructing $\calE^0,\calE^1$ using Lemma~\ref{lemma:decomposition}. Then we construct $\calN^{0},\calN^{1}$ by Lemma~\ref{lemma:system-subtraction}. Lemma~\ref{lemma:decomposition} and \ref{lemma:system-subtraction} together ensure that $\calN^{0}$ and $\calN^1$ are $(\eps, 0)$-indistinguishable, which enables us to invoke Lemma~\ref{lemma:pure-decompose} and decompose $\calN^0,\calN^1$ into $\calN^{0'},\calN^{1'}$.  $(\calE^0,\calE^1,\calN^{0'},\calN^{1'})$ forms the final decomposition. It is straightforward to verify that they satisfy \eqref{eq:conj-decompose-2}.

\subsection{R\'enyi Differential Privacy}\label{sec:proof-renyi}

Our result for R\'enyi differential privacy (Theorem~\ref{theo:renyi}) takes a completely different approach. 


\noindent\textbf{An intuition.} Let $\calM_1$ be an $(\alpha,\eps)$-R\'enyi DP mechanism. Intuitively, $(\alpha,\eps)$-R\'enyi DP means that $\calM_1$ has $\eps$ unit of privacy budget and can distribute it to $T$ queries. Viewing the privacy budget as a form of ``deposit'', we hope to argue that two or more independently running mechanisms spend their deposit independently, and an adversary cannot trigger any mechanism to spend more privacy budget than it holds by interacting with other mechanisms. 

However, unlike some intuitive and easy-to-measure resources such as time and energy, the notion of privacy loss looks somewhat illusive. Even worse, we need to reason about this elusive resource in a stochastic process consisting of interactions with multiple systems. It was not clear how one can quantify the privacy loss in such an interactive and complex process. Nonetheless, we manage to find a new approach to do so.

\noindent\textbf{An alternative characterization for R\'enyi DP.} We introduce the following  characterization of R\'enyi divergence based on H\"older's inequality and duality. That is, we prove

\begin{lemma}[An alternative characterization of R\'enyi divergence]\label{lemma:renyi-charact}
Suppose $P, Q$ are two distributions supported over $\calY$. For every $\alpha > 1$ and $B\ge 0$, let $\beta = \frac{\alpha}{\alpha - 1}$ be the H\"older conjugate of $\alpha$. The following statements are equivalent.
\begin{itemize}
    \item $D_{\alpha}(P\| Q) \le B$.
    \item For every function $h : \calY\to \mathbb{R}^{\ge 0}$, it holds that $\Ex_{y\sim P}[h(y)]\le e^{\frac{B(\alpha-1)}{\alpha}} \Ex_{y\sim Q}[h(y)^{\beta}]^{1/\beta}$.
\end{itemize}
\end{lemma}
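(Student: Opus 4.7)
The plan is a fairly clean two-sided calculation built around Hölder's inequality. The forward implication ($D_{\alpha}(P\|Q)\le B \Rightarrow$ the functional inequality) is a direct application of Hölder with conjugate exponents $\alpha$ and $\beta=\alpha/(\alpha-1)$. The reverse implication is the slightly more interesting one and is obtained by choosing the specific $h$ that makes Hölder's inequality tight, namely $h(y)=(P(y)/Q(y))^{\alpha-1}$ (restricted to the support of $Q$). I expect the only minor annoyance to be the handling of the case where $P$ is not absolutely continuous with respect to $Q$, but this is essentially automatic.

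For the forward direction, I would first rewrite the expectation under $P$ as a weighted expectation under $Q$:
\[
\Ex_{y\sim P}[h(y)] \;=\; \Ex_{y\sim Q}\!\left[\frac{P(y)}{Q(y)}\cdot h(y)\right].
\]
Apply Hölder with exponents $\alpha$ and $\beta$ to get
\[
\Ex_{y\sim Q}\!\left[\frac{P(y)}{Q(y)} h(y)\right]\le \Ex_{y\sim Q}\!\left[\left(\frac{P(y)}{Q(y)}\right)^{\alpha}\right]^{1/\alpha}\! \Ex_{y\sim Q}[h(y)^{\beta}]^{1/\beta}.
\]
Finally, note that $\Ex_{y\sim Q}[(P/Q)^\alpha] = \Ex_{y\sim P}[(P/Q)^{\alpha-1}] = e^{(\alpha-1)D_{\alpha}(P\|Q)}$, so the first factor is at most $e^{(\alpha-1)B/\alpha}$, which is exactly the stated constant.

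For the reverse direction, I would plug in $h(y) = (P(y)/Q(y))^{\alpha-1}$ into the hypothesised inequality. Since $(\alpha-1)\beta = \alpha$, the right-hand-side inner expectation becomes $\Ex_{y\sim Q}[(P/Q)^{\alpha}]=e^{(\alpha-1)D_{\alpha}(P\|Q)}$, while the left-hand side is $\Ex_{y\sim P}[(P/Q)^{\alpha-1}]$, which also equals $e^{(\alpha-1)D_{\alpha}(P\|Q)}$. Writing $D = D_{\alpha}(P\|Q)$, the hypothesis then gives
\[
e^{(\alpha-1)D}\le e^{(\alpha-1)B/\alpha}\cdot e^{(\alpha-1)D/\beta}=e^{(\alpha-1)B/\alpha+(\alpha-1)^{2}D/\alpha},
\]
and taking logarithms and simplifying collapses to $D\le B$, as desired. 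This is the step I would highlight as the "duality" heart of the statement: the chosen $h$ saturates Hölder, so any constant weaker than $e^{(\alpha-1)B/\alpha}$ in the functional bound would force $D_{\alpha}(P\|Q)>B$.

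The only subtle point is measurability/support: if there exists $y$ with $Q(y)=0<P(y)$, then $D_{\alpha}(P\|Q)=\infty$ and the forward direction holds vacuously, while for the reverse direction one can take $h$ to be the indicator of that point, giving a positive LHS and zero RHS, contradicting the hypothesis unless $B=\infty$. Thus the equivalence is trivial in the degenerate case, and the calculations above can safely assume $P\ll Q$, replacing $P/Q$ on $\mathrm{supp}(Q)^c$ by $0$. This handles the whole statement; I do not foresee any genuine obstacle beyond choosing the saturating $h$.
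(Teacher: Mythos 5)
Your proposal is correct and follows essentially the same route as the paper: H\"older's inequality with exponents $\alpha,\beta$ gives the forward direction, and the reverse direction exploits the tightness of H\"older, with the degenerate case $P\not\ll Q$ handled separately. The only cosmetic difference is that you plug in the explicit saturating function $h=(P/Q)^{\alpha-1}$ and cancel, whereas the paper invokes the dual (variational) characterization $\|P/Q\|_{Q,\alpha}=\sup_{h\not\equiv 0}\langle h,P/Q\rangle_Q/\|h\|_{Q,\beta}$ as a stated fact.
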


Note that if we let $\alpha \to \infty$, then Lemma~\ref{lemma:renyi-charact} converges to a characterization of pure-DP. That is, $D_{\infty}(P\| Q)\le B$ if and only if $\Pr[P=y] \le e^B \Pr[Q=y]$ for every $y\in \calY$.

Lemma~\ref{lemma:renyi-charact} provides a convenient tool to reason about the privacy loss in an interactive environment consisting of multiple rounds. Intuitively, this is because Condition $2$ in the statement above is more amenable to a ``hybrid argument''. However, to quantify the privacy loss during an interaction, we still need to find a way to track the privacy loss.

\noindent\textbf{Measure theory setup.} Before we continue, it would be more convenient to switch to a measure-theoretic language. Consider two measures $P,Q$ on a space $\calY$ ($P$ and $Q$ are not necessarily probability measures), we say that $P$ is $\beta$-dominated by $Q$, denoted by $P\preceq_{\beta} Q$, if for every measurable function $f:\calY\to \mathbb{R}^{\ge 0}$, it holds that
\[
\| f\|_{P, 1} := \int f(y) dP(y) \le \left( \int f(y)^{\beta} dQ(y) \right)^{1/\beta} =: \| f\|_{Q,\beta}.
\]
When $\calY$ is a finite set, the integral coincides with an equivalent summation. i.e.,
\[
\int f(y) dP(y) = \sum_{y} P(y) f(y).
\]
We will use integral and summation interchangeably.

In this notation, Lemma~\ref{lemma:renyi-charact} can be equivalently stated as $D_{\alpha}(P\| Q) \le B$ if and only if $P$ is $\beta$-dominated by $e^BQ$ for $\beta = \frac{\alpha}{\alpha - 1}$  .

The following lemma is essential for us.

\begin{lemma}\label{lemma:renyi-tracker}
Let $\calY_1\times \calY_2$ be a space. Consider two distributions $P,Q$ on $\calY_1 \times \calY_2$. Assume $\supp(P) = \supp(Q) = \calY_1\times \calY_2$. Let $P_1,P_2$ be the margin of $P$ on $\calY_1,\calY_2$. For each $y_1\in \calY_1$, denote by $P_2|_{P_1=y_1}$ the marginal distribution of $y_2$ conditioning on $y_1$. Also define the same notation for $Q$. 

Let $\beta \ge 1, B\ge 0$ be two reals. Let $\alpha = \frac{\beta}{\beta - 1}$. For each $y_1\in \calY_1$, define 
\[
\ell_1(y_1) = \inf_K \left\{ K : P_2|_{P_1 = y_1} \preceq_{\beta} K \cdot  Q_2|_{Q_1 = y_1} \right\} = \exp({D_{\alpha}(P_2|_{P_1=y_1} \| Q_2|_{Q_1 = y_1})}).
\]
Suppose $P\preceq_\beta e^B Q$. Consider the measure spaces $(\calY_1,P_1(y_1)\cdot \ell_1(y_2)^{1/\beta})$ and $(\calY_1,Q_2)$. We have
\[
P_1\ell_1^{1/\beta} \preceq_{\beta} e^B Q_1.
\]
\end{lemma}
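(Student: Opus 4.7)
The plan is to reduce the target inequality to a single application of the hypothesis $P \preceq_\beta e^B Q$ against a carefully chosen product-form test function. Concretely, fix an arbitrary measurable $f: \calY_1 \to \mathbb{R}^{\geq 0}$; I will exhibit a nonnegative $h: \calY_1 \times \calY_2 \to \mathbb{R}^{\geq 0}$ such that (i) $\int h(y_1, y_2)\, dP_2|_{P_1=y_1}(y_2) = \ell_1(y_1)^{1/\beta}$ for every $y_1$, and (ii) $\int h(y_1, y_2)^\beta\, dQ_2|_{Q_1=y_1}(y_2) = 1$ for every $y_1$. Setting $g(y_1, y_2) = f(y_1)\, h(y_1, y_2)$ and feeding $g$ into the hypothesis, Fubini gives $\int g\, dP = \int f(y_1)\, \ell_1(y_1)^{1/\beta}\, dP_1(y_1)$ on the left and $\int g^\beta\, dQ = \int f(y_1)^\beta\, dQ_1(y_1)$ on the right, at which point $P \preceq_\beta e^B Q$ reads off as the desired $P_1 \ell_1^{1/\beta} \preceq_\beta e^B Q_1$.

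To build $h$, I would use the Hölder witness that achieves the tight constant in $P_2|_{P_1=y_1} \preceq_\beta \ell_1(y_1)\, Q_2|_{Q_1=y_1}$, which holds by the very definition of $\ell_1(y_1)$. Writing $r(y_1, y_2) = P_2|_{P_1=y_1}(y_2) / Q_2|_{Q_1=y_1}(y_2)$ (well-defined by the full-support assumption), I would set $h(y_1, y_2) = \ell_1(y_1)^{-(\alpha-1)/\beta} \cdot r(y_1, y_2)^{\alpha-1}$. Two short computations, using the Rényi identity $\int r^\alpha\, dQ_2|_{Q_1=y_1} = \ell_1(y_1)^{\alpha-1}$ together with the conjugate-exponent identity $(\alpha-1)\beta = \alpha$, verify constraints (i) and (ii).

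The main obstacle is essentially bookkeeping: one must simultaneously satisfy two distinct normalizations, one against $dP_2|_{P_1=y_1}$ and one against the $\beta$-norm over $dQ_2|_{Q_1=y_1}$, using a single scalar rescaling that depends only on $y_1$. This works precisely because the pair $(\alpha, \beta)$ is conjugate, so the powers of $\ell_1(y_1)$ arising on the $P$-side and on the $Q$-side line up to $1/\beta$ and $0$ respectively via $1 - 1/\beta = 1/\alpha$. Once (i) and (ii) are in place, the proof concludes by a one-line application of the hypothesis followed by Fubini; I expect no further subtleties.
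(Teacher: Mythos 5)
Your proposal is correct and is essentially the paper's own argument: the paper also tests the hypothesis $P\preceq_\beta e^B Q$ against a product function $g(y_1)\,f_{y_1}(y_2)$ whose fibers are Hölder-extremal witnesses normalized so that $\int f_{y_1}\,dP_2|_{P_1=y_1}=\ell_1(y_1)^{1/\beta}$ and $\|f_{y_1}\|_{Q_2|_{Q_1=y_1},\beta}=1$, only phrased as a proof by contradiction rather than directly. Your explicit choice $h=\ell_1(y_1)^{-(\alpha-1)/\beta}r^{\alpha-1}$ is exactly the extremal function the paper invokes abstractly via the sharpness of Hölder's inequality, and your verification of the two normalizations is correct.
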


Intuitively, the function $\ell(y_1)$ serves as the role of ``privacy budget monitor''. To see this, fix an adversary $\calA$ and think of $(y_1,y_2)$ as the responses of the system to the adversary\footnote{Although the query made by $\calA$ is not explicitly recorded, the pair $(y_1,y_2)$ can capture this information by requiring that each response $y_i$ must be attached with the query message $x_i$. This does not leak any additional information because $x_i$ is solely chosen by $\calA$.}. After observing $y_1$, the adversary wants to distinguish between two conditional distributions $P_2|_{P_1 = y_1}$ and $Q_2|_{Q_1 = y_1}$. At this moment, $\ell(y_1)$ shows up as an upper bound of ``extra information'' that the adversary can extract by utilizing their second query. Alternatively, $\ell(y_1)$ quantifies the amount of the remaining privacy budget the mechanism has after outputting $y_1$. On average, the function $\ell_1(y_1)$ provides a fine-grained control of the privacy loss in the sense that $P_1\ell_1^{1/\beta} \preceq_{\beta} e^B Q_1$.

\noindent\textbf{Proof for a $3$-round toy example.} We are ready to describe the proof for Theorem~\ref{theo:renyi}. To illustrate the idea, we prove a toy case here and defer the full proof to Appendix~\ref{append:renyi}. The proof for the toy case includes all the important ideas. Extending it to a full proof is straightforward. 

We describe the toy scenario now. Suppose there are two mechanisms $\calM_1,\calM_2$ that run on a sensitive input bit $b\in \{0,1\}$. The interaction consists of $3$ rounds. The adversary $\calA$ communicates with $\calM_1,\calM_2,\calM_1$ in order, and outputs the response $(y_1,y_2,y_3)$. For brevity, we assume that each response $y_i$ contains a copy of the query message $x_i$, so that we recover the whole transcript $((x_1,y_1),(x_2,y_2),(x_3,y_3))$ only from the responses.

Let $P, Q\in \Delta(\calY\times \calY\times \calY)$ be the output distribution when $\calA$ interacts with $(\calM^0_1,\calM^0_2)$ and $(\calM^1_1,\calM^1_2)$, respectively. Suppose $\calM_1,\calM_2$ are $(\alpha,\eps_1)$, $(\alpha, \eps_2)$-R\'enyi DP respectively. Our goal is to prove that
\[
\max\left\{ D_{\alpha}(P\| Q), D_{\alpha}(Q \| P) \right\} \le \eps_1 + \eps_2.
\]
We bound $D_{\alpha}(P\| Q)$ below. The bound for $D_{\alpha}(Q\|P)$ is symmetric. Be Lemma~\ref{lemma:renyi-charact}, it suffices to show that for every $h : \calY\times \calY\to \calY \to \mathbb{R}^{\ge 0}$ that
\begin{align}
\sum_{y = (y_1,y_2,y_3)} P(y) h(y) \le \left( e^{\eps_1 + \eps_2}  \sum_{y = (y_1,y_2,y_3)} Q(y) h(y)^\beta \right)^{1/\beta} \label{eq:renyi-goal}
\end{align}
where $\beta = \frac{\alpha}{\alpha - 1}$ is the H\"older conjugate of $\alpha$. Let $P_1,P_2,P_3$ be the projection of $P$ onto the three rounds, and let $P_i|_{y_{<i}}$ denote the distribution of $y_i$ conditioning on $y_1,\dots, y_{i-1}$. Also define the same notation for $Q$. Then we write 
\begin{align}
\sum_{y = (y_1,y_2,y_3)} P(y) h(y) = \sum_{y_1} \left( P_1(y_1) \sum_{y_2} \left( P_2|_{y_1}(y_2) \sum_{y_3} P_3|_{y_{<3}}(y_3) h(y) \right) \right). \label{eq:renyi-rewrite}
\end{align}
For every $y_1\in \calY$, let $\calM^0_1|_{y_1}$ (resp. $\calM^1_1|_{y_1}$) denote the interactive system $\calM^0_1$ (resp. $\calM_1^1$) \emph{conditioning on} that it has answered $y_1$ to the first query (recall we have assumed that $y_1$ contains $x_1$). Formally, for every $b\in \{0,1\}$, $(x_2,y_2),\dots, (x_t,y_t)$ and $x_{t+1}$, define
\[
\calM^b_1|_{y_1}((x_j,y_j)_{2\le j\le t}, x_{t+1}) := \calM^b_1((x_j,y_j)_{1\le j\le t}, x_{t+1}).
\]
Next, define
\begin{align}
\ell_1(y_1) := \exp\left( \sup_{A: \text{adversary}} \left\{ D_{\alpha}\big( \IT(A:\calM^0_1|_{y_1}) \| \IT(A:\calM^1_1|_{y_1}) \big) \right\} \right). \label{eq:renyi-ell}
\end{align}
From Lemma~\ref{lemma:renyi-tracker}, one can show that $P_1 \ell_1^{1/\beta} \preceq e^B Q_1$. Turning back to \eqref{eq:renyi-goal}, we then have
\begin{align}
&~~~~  \sum_{y_1} \left( P_1(y_1) \sum_{y_2} \left( P_2|_{y_1}(y_2)  \sum_{y_3}P_3|_{y_{<3}}(y_3)  \underline{ h(y) } \right) \right) \label{eq:step_0} \\
&\le  \sum_{y_1} \left( P_1(y_1) \sum_{y_2} \left( P_2|_{y_1}(y_2)  ~\underline{ \left(\ell_1(y_1) \sum_{y_3} Q_3|_{y_{<3}}(y_3) h(y)^{\beta} ~\right)^{1/\beta}}  \right) \right) \label{eq:step_1}\\
&\le  \sum_{y_1} \left( P_1(y_1)  \left(e^{\eps_2}  \sum_{y_2} \left( Q_2|_{y_1}(y_2)  \ell_1(y_1)\sum_{y_3} Q_3|_{y_{<3}}(y_3) h(y)^{\beta}  \right) \right)^{1/\beta} \right)  \label{eq:step_2} \\
&=  \sum_{y_1} \left( P_1(y_1)  \ell_1(y_1)^{1/\beta} ~\underline{ \left(e^{\eps_2} \sum_{y_2} \left( Q_2|_{y_1}(y_2)  \sum_{y_3} Q_3|_{y_{<3}}(y_3) h(y)^{\beta}  \right)  \right)^{1/\beta} }~  \right)  \label{eq:step_25}\\
&\le  \left( e^{\eps_1+\eps_2} \sum_{y_1} \left( Q_1(y_1) \sum_{y_2}\left( Q_2|_{y_1}(y_2)  \sum_{y_3} Q_3|_{y_{<3}}(y_3) h(y)^{\beta}  \right) \right) \right)^{1/\beta} \label{eq:step_3} \\
&= \left( e^{\eps_1+\eps_2} \sum_{y} Q(y) h(y)^\beta \right)^{1/\beta}. \label{eq:final}
\end{align}
Here, we used inequalities of the form $\sum_{y} P(y)\cdot  h(y) \le \left(C\cdot  \sum_{y} Q(y) \cdot h(y)^{\beta}\right)^{1/\beta}$ three times (they are \eqref{eq:step_0} $\Rightarrow$ \eqref{eq:step_1} $\Rightarrow$ \eqref{eq:step_2} and \eqref{eq:step_25} $\Rightarrow$ \eqref{eq:step_3}). We use underlines to highlight the ``$h$'' part of each step in the deductions above. 

\eqref{eq:step_0} $\Rightarrow$ \eqref{eq:step_1} is the most critical step. To see this, observe that knowing $y_2$ does not change the view of the first mechanism, because the second query is sent to the independently running mechanism $\calM^b_2$. Therefore, $\calM^b_1|_{y_1}$ remains the same after conditioning on \emph{both} $y_1$ and $y_2$. Now, note that $P_3|_{y_{<3}}$ (resp. $Q_3|_{y_{<3}}$) exactly describes one round of interaction between the adversary and $\calM^0_1|_{y_1}$ (resp. $\calM^1_1|_{y_1}$). Consequently, the information leaked by $y_3$ must be subject to the bound \eqref{eq:renyi-ell} and the inequality holds. Having verified \eqref{eq:step_0} $\Rightarrow$ \eqref{eq:step_1}, the steps \eqref{eq:step_1} $\Rightarrow$ \eqref{eq:step_2} and \eqref{eq:step_25} $\Rightarrow$ \eqref{eq:step_3} are straightforward. 


Having justified \eqref{eq:final} for every measure function $h$, we conclude that $D_{\alpha}(P\| Q) \le e^{\eps_1 + \eps_2}$. A symmetric argument shows that $D_{\alpha}(Q\| P) \le e^{\eps_1 + \eps_2}$. This completes the proof for the toy example.

\noindent\textbf{Proof sketch for the general case.} The proof for the general case extends the idea above with some minor twists. By induction, we only need to prove the composition theorem for the case with two mechanisms and many rounds. An issue worth noting is that $\calA$ can choose the next query object based on previous responses. However, we can suppose without loss of generality that $\calA$ always communicates with mechanisms alternately, by adding a vanilla query $x^*$ to the query space. If the current mechanism is not the one $\calA$ wishes to speak with, $\calA$ just sends the vanilla query $x^*$. The mechanism then returns a fixed response, which does not leak any information. We refer to Appendix~\ref{append:renyi} for the detail of the proof.


\section{Conclusion and Future Directions}

In this work, we consider the concurrent composition of interactive mechanisms. Regarding the general privacy guarantee under the concurrent composition, our result gives optimal composition theorems for several popular definitions of differential privacy, including $(\eps,\delta)$-DP and R\'enyi DP. Our work is purely theoretical, and we do not see any negative societal impacts it may cause.

For future directions, we ask whether one can use our composition theorems to design new differentially-private algorithms that may involve running several differentially-private mechanisms in parallel. It is also interesting to explore more practical implications of the concurrent composition phenomena.


We also note that there is a recent interest in \emph{fully adaptive} compositions of differential privacy, which studies how the data analyst can manage the privacy budget and monitor the privacy loss themselves. In particular, the notion of privacy odometers and filters were proposed to capture these demands \citet{DBLP:conf/nips/RogersVRU16, DBLP:conf/nips/FeldmanZ21, WRRW-adaptive_compisition,Leuyer21-renyi_adaptive}. This question necessitates a better understanding of information leakage in an interactive environment. Our work developed several new tools and techniques to reason about interactive mechanisms. Can our technique be useful in studying fully adaptive compositions?


\section*{Acknowledgements}

I am grateful to Jelani Nelson for advising this project and providing useful comments on an early draft of this paper. I would also like to thank Salil Vadhan and Wanrong Zhang for insightful discussions about their work.

X. Lyu was supported by ONR DORECG award N00014-17-1-2127.

\bibliographystyle{plainnat}
\bibliography{neurips_2022}


\section*{Checklist}

\begin{enumerate}

\item For all authors...
\begin{enumerate}
  \item Do the main claims made in the abstract and introduction accurately reflect the paper's contributions and scope?
    \answerYes{}
  \item Did you describe the limitations of your work?
    \answerYes{We included a discussion for future directions.}
  \item Did you discuss any potential negative societal impacts of your work?
    \answerYes{}
  \item Have you read the ethics review guidelines and ensured that your paper conforms to them?
    \answerYes{}
\end{enumerate}

\item If you are including theoretical results...
\begin{enumerate}
  \item Did you state the full set of assumptions of all theoretical results?
    \answerYes{}
        \item Did you include complete proofs of all theoretical results?
    \answerYes{The full proofs will be available in supplementary material.}
\end{enumerate}

\item If you ran experiments...
\begin{enumerate}
  \item Did you include the code, data, and instructions needed to reproduce the main experimental results (either in the supplemental material or as a URL)?
    \answerNA{}
  \item Did you specify all the training details (e.g., data splits, hyperparameters, how they were chosen)?
    \answerNA{}
        \item Did you report error bars (e.g., with respect to the random seed after running experiments multiple times)?
    \answerNA{}
        \item Did you include the total amount of compute and the type of resources used (e.g., type of GPUs, internal cluster, or cloud provider)?
    \answerNA{}
\end{enumerate}

\item If you are using existing assets (e.g., code, data, models) or curating/releasing new assets...
\begin{enumerate}
  \item If your work uses existing assets, did you cite the creators?
    \answerNA{}
  \item Did you mention the license of the assets?
    \answerNA{}
  \item Did you include any new assets either in the supplemental material or as a URL?
    \answerNA{}
  \item Did you discuss whether and how consent was obtained from people whose data you're using/curating?
    \answerNA{}
  \item Did you discuss whether the data you are using/curating contains personally identifiable information or offensive content?
    \answerNA{}
\end{enumerate}

\item If you used crowdsourcing or conducted research with human subjects...
\begin{enumerate}
  \item Did you include the full text of instructions given to participants and screenshots, if applicable?
    \answerNA{}
  \item Did you describe any potential participant risks, with links to Institutional Review Board (IRB) approvals, if applicable?
    \answerNA{}
  \item Did you include the estimated hourly wage paid to participants and the total amount spent on participant compensation?
    \answerNA{}
\end{enumerate}

\end{enumerate}



\appendix

\section{Appendix: Missing Proofs}\label{append:proof}

In this appendix, we show the formal proofs for all the lemmas and claims in the main paper.

\subsection{Proofs for Approximate Differential Privacy}\label{append:approxiate}

This subsection present omitted proofs in Section~\ref{sec:proof-approximate}.

\subsubsection{The key lemma}\label{sec:key-lemma}

We start with the proof for Lemma~\ref{lemma:decomposition}. 
 
\begin{reminder}{Lemma~\ref{lemma:decomposition}}
Suppose $\calM^0,\calM^1$ are $(\eps,\delta)$-indistinguishable. There are two systems $\calE^0,\calE^1$ satisfying the following.
\begin{itemize}
    \item For every adversary $\calA$ and $b\in \{0,1\}$, it holds that $\IT(\calA : \calM^b) \ge \delta\cdot  \IT(\calA : \calE^b)$.
    \item For every adversary $\calA$, every set of transcripts $S\subseteq \{(x_i,y_i)_{i\in [T]}\}$ and $b\in \{0,1\}$, it holds that
    \[
    \begin{aligned}
    &  ~~~~ \Pr[\IT(\calA : \calM^b)\in S] - \delta \Pr[\IT(\calA:\calE^b)\in S]  \\
    & \le e^\eps \left( \Pr[\IT(\calA : \calM^{1-b} ) \in S] - \delta \Pr[\IT(\calA : \calE^{1-b})\in S] \right).
    \end{aligned}
    \]
\end{itemize}
\end{reminder}
\begin{proof} Without loss of generality, we assume that the interaction between $\calM^{0/1}$ and $\calA$ consists of exactly $T\in \mathbb{N}$ rounds. For every $t\in [T]$, each $(x_i,y_i)_{i\in [t]}$ and $b\in \{0,1\}$, denote 
\begin{align}
M^b((y_i)_{i\in [t]},(x_i)_{i\in [t]}) := \prod_{i=1}^{t} \Pr[\calM^{b}((x_j,y_j)_{j < i}, x_i) = y_i]. \label{eq:def-Mb}
\end{align}
Intuitively, $M^b((y_i)_{i\in [t]},(x_i)_{i\in [t]})$ is the probability of $\calM^b$ responding $(y_1,\dots, y_t)$, conditioning on that the query messages are fixed to $(x_1,\dots, x_t)$. Note that knowing $M^b((y_i)_{i\in [T]},(x_i)_{i\in [T]})$ for every $(x_i,y_i)_{i\in [T]}$ \emph{uniquely} determines the system.

Let $\calA$ be an arbitrary adversary. For each $(x_i,y_i)_{i\in [t-1]}$ and $x_t$, denote
\begin{align}
A((x_i)_{i\in [t]},(y_i)_{i\in [t-1]}) := \prod_{i=1}^t \Pr[\calA((x_j,y_j)_{j < i}) = x_i]. \label{eq:def-Ab}
\end{align}
Note that $A((x_i)_{i\in[t]},(y_i)_{i\in [t-1]})$ is the probability of $\calA$ sending queries $(x_1,\dots, x_t)$, conditioning on that the responses to the first $t-1$ queries are fixed to $(y_1,\dots, y_{t-1})$. 

In the following, when the size of a list $(\ell_i)_{i\in [L]}$ is clear from context, we may omit the subscript and simply write $(\ell_i)$ to denote the list. Now, having defined \eqref{eq:def-Mb} and \eqref{eq:def-Ab}, we observe for each transcript $(x_i,y_i)_{i\in [T]}$ that
\begin{align}
\Pr[\IT(\calA,\calM^b) = (x_i,y_i)_{i\in [T]}] = M^b((y_i),(x_i))\cdot  A((x_i),(y_i)). \label{eq:decouple}
\end{align}

In the following, we will construct two systems $\calE^{0/1}$ such that, for every $(x_i,y_i)_{i\in [T]}$, it holds that
\begin{align}
M^b((y_i),(x_i)) \ge \delta E^b((y_i),(x_i))\label{eq:goal-1}
\end{align}
and 
\begin{align}
M^b((y_i),(x_i))-\delta E^b((y_i),(x_i)) \le e^\eps \big(M^{1-b}((y_i),(x_i)) - \delta E^{1-b}((y_i),(x_i))\big). \label{eq:goal-2}
\end{align}
If we have two systems $\calE^{0/1}$ satisfying the above, then we can verify that they satisfy the lemma statement by combining \eqref{eq:decouple}, \eqref{eq:goal-1} and \eqref{eq:goal-2}.

Now we describe the construction. We start by defining for each $b\in \{0,1\}$, $t\le T$ and every partial history $(x_i,y_i)_{i\le t-1} \in (\calX\times \calY)^{t-1}$, $x_{t}\in \calX$ a control function as
\begin{align}
    \Lower^b((x_i,y_i)_{i< t}, x_{t}) := 
    \begin{cases}
    \sum_{y_{t}\in \calY} \max_{x_{t+1}} \{\Lower^b((x_i,y_i)_{i\le t}, x_{t+1}) \} & \text{$t < T$} \\
    \sum_{y_t\in \calY} \max\left\{ M^b((y_i),(x_i)) - e^\eps M^{1-b}((y_i),(x_i)),0\right\} & \text{$t = T$} 
    \end{cases}. \label{eq:def-lower}
\end{align}
For every $t\le T-1$ and $(x_i,y_i)_{i\le t}$, we also define the following control function:
\begin{align}
    & \Upper^b((x_i,y_i)_{i\le t}) := 
     M^b((y_i)_{i\le t},(x_i)_{i\le t})  - e^{-\eps} M^{1-b}((y_i)_{i\le t},(x_i)_{i\le t}).
\end{align}

We need the following two facts regarding the control functions.

\begin{claim}\label{claim:start-good}
For each $b\in \{0,1\}$ and $x_1\in \calX$, it holds that $\Lower^b(\emptyset, x_1)\le \delta$.
\end{claim}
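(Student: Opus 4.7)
The plan is to unroll the recursive definition of $\Lower^b(\emptyset, x_1)$ and recognize it as exactly the ``excess mass'' achievable by an optimal deterministic adversary trying to distinguish $\calM^b$ from $\calM^{1-b}$ on a particular worst-case event. The $(\eps,\delta)$-indistinguishability of $\calM^0$ and $\calM^1$ then immediately yields the bound of $\delta$.

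First, I would prove by backward induction on $t$ (from $t=T$ down to $t=1$) that
\[
\Lower^b\bigl((x_i,y_i)_{i<t}, x_t\bigr) = \sum_{y_t}\max_{x_{t+1}}\sum_{y_{t+1}}\max_{x_{t+2}}\cdots\max_{x_T}\sum_{y_T} \max\bigl\{M^b((y_i),(x_i)) - e^\eps M^{1-b}((y_i),(x_i)),\,0\bigr\}.
\]
The base case $t=T$ is immediate from \eqref{eq:def-lower}, and the inductive step applies the recursive clause of \eqref{eq:def-lower}. Setting $t=1$ gives an explicit formula for $\Lower^b(\emptyset, x_1)$. Next, I would construct a deterministic adversary $\calA^*$ that sends $x_1$ as its first query and, after observing the history $(x_1,y_1,\dots,x_t,y_t)$, chooses $x_{t+1}$ to attain $\max_{x_{t+1}} \Lower^b((x_i,y_i)_{i\le t}, x_{t+1})$ (breaking ties arbitrarily). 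Each response sequence $(y_1,\dots,y_T)$ then uniquely determines a consistent query sequence $(x_1,x_2^*(y_1),\dots,x_T^*(y_{<T}))$ under $\calA^*$.

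Finally, define the ``bad'' event $S := \{(x_i,y_i)_{i\in[T]} : M^b((y_i),(x_i)) > e^\eps M^{1-b}((y_i),(x_i))\}$. Using \eqref{eq:decouple} and the fact that $A((x_i),(y_i)) \in \{0,1\}$ under the deterministic $\calA^*$ (equal to $1$ iff $(x_i)$ matches $\calA^*$'s strategy on $(y_i)$), the explicit formula from Step 1 can be rewritten as
\[
\Lower^b(\emptyset, x_1) = \Pr[\IT(\calA^*:\calM^b)\in S] - e^\eps \Pr[\IT(\calA^*:\calM^{1-b})\in S],
\]
because transcripts inconsistent with $\calA^*$ contribute $0$ to both probabilities, and for consistent transcripts in $S$ the $\max\{\cdot,0\}$ is attained by its positive argument. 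Applying $(\eps,\delta)$-indistinguishability to $\calA^*$ and $S$ gives $\Pr[\IT(\calA^*:\calM^b)\in S] \le e^\eps \Pr[\IT(\calA^*:\calM^{1-b})\in S] + \delta$, which is exactly $\Lower^b(\emptyset,x_1)\le \delta$. I do not anticipate a serious obstacle; the only delicate point is verifying that the nested ``$\max_{x_{t+1}}$'' operations in the unrolled formula can be pulled outside the inner sums because the inner expression depends on $x_{t+1}$ only through the history visible to $\calA^*$ at the moment of choosing its $(t{+}1)$-th query, so $\calA^*$'s history-dependent strategy realizes the nested maxima term-by-term.
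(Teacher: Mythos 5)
Your proposal is correct and follows essentially the same route as the paper: build the deterministic adversary that greedily picks the $\arg\max$ query in the recursion for $\Lower^b$, identify $\Lower^b(\emptyset,x_1)$ with the excess mass $\Pr[\IT(\calA^*:\calM^b)\in S]-e^\eps\Pr[\IT(\calA^*:\calM^{1-b})\in S]$ on the bad event, and invoke $(\eps,\delta)$-indistinguishability. Your explicit unrolling of the recursion and the remark about transcripts inconsistent with $\calA^*$ contributing zero simply spell out the "tracing" step the paper leaves informal.
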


\begin{proof}
We construct an adversary $\calA^*$ as follows. $\calA^*$ is deterministic. It always sends $x_1$ as the first query. For every $1\le t\le T-1$ and history $(x_i,y_i)_{i\in [t]}$, $\calA^*$ computes the next query as
\[
\calA^*((x_i,y_i)_{i\in [t]}) = \arg\max_{x_{t+1}}\{ \Lower^b((x_i,y_i)_{i\le t}, x_{t+1}) \}.
\]
Now, define 
\[
S^b := \{ (x_i,y_i)_{i\in [T]} : \Pr[\IT(\calA^*,\calM^b)=  (x_i,y_i)_{i\in [T]} > e^\eps \Pr[\IT(\calA^*,\calM^{1-b}) = (x_i,y_i)_{i\in [T]}] \}.
\]
Given that $\calM^0$ and $\calM^1$ are $(\eps,\delta)$-indistinguishable, we know that
\[
\sum_{(x_i,y_i)\in S^b} \Pr[\IT(\calA^*,\calM^b)=  (x_i,y_i)_{i\in [T]}] - e^\eps \Pr[\IT(\calA^*,\calM^{1-b}) = (x_i,y_i)_{i\in [T]}] \le \delta.
\]
On the other hand, by the definition of $\calA^*$ and \eqref{eq:def-lower}, it holds that
\[
\Lower^b(\emptyset, x_1) = \sum_{(x_i,y_i)\in S^b} \Pr[\IT(\calA^*,\calM^b)=  (x_i,y_i)] - e^\eps \Pr[\IT(\calA^*,\calM^{1-b}) = (x_i,y_i)].
\]
This can be verified by tracing how $\Lower^b(\emptyset, x_1)$ is determined from queries $(x_1,\dots, x_T)$ (in the ``max'' operator), and noting that $\calA^*$ follows exactly the same queries. Combining two equations above concludes the proof of Claim.
\end{proof}

\begin{claim}\label{claim:up-good}
For every $b\in \{0,1\}$, $t\le T-1$ and $(x_i,y_i)_{i\le t-1} \in (\calX\times \calY)^{t-1}, x_t\in \calX$, it holds that
\[
\Lower^b((x_i,y_i)_{i<t},x_{t}) \le \Upper^b((x_i,y_i)_{i<t}) + e^{-\eps} \Lower^{1-b}((x_i,y_i)_{i<t},x_t).
\]
\end{claim}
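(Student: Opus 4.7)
The plan is backward induction on $t$, taking the natural base case at $t = T$ (one level beyond what the claim states, but needed to bootstrap the recursion). Although the claim restricts to $t \le T-1$, the inequality
\[
\Lower^b((x_i,y_i)_{i<T}, x_T) \le \Upper^b((x_i,y_i)_{i<T}) + e^{-\eps}\Lower^{1-b}((x_i,y_i)_{i<T}, x_T)
\]
makes sense and follows directly from the $t = T$ base formula of $\Lower$. Writing $p_c := M^c((y_i)_{i\le T}, (x_i)_{i\le T})$ for $c\in \{0,1\}$ and applying that formula, the base case reduces to the pointwise inequality
\[
\max\{p_b - e^\eps p_{1-b}, 0\} \le (p_b - e^{-\eps} p_{1-b}) + e^{-\eps}\max\{p_{1-b} - e^\eps p_b, 0\}
\]
for each $y_T$. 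I would verify this by splitting on the ordering of $p_b$ and $e^{\pm\eps}p_{1-b}$. Summing over $y_T$ and using the marginalization identity $\sum_{y_T} p_c = M^c((y_i)_{i<T}, (x_i)_{i<T})$ on both sides then yields the stated bound at $t = T$.

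For the inductive step with $t \le T-1$, I would expand the recursion
\[
\Lower^b((x_i,y_i)_{i<t}, x_t) = \sum_{y_t}\max_{x_{t+1}}\Lower^b((x_i,y_i)_{i\le t}, x_{t+1}),
\]
then apply the induction hypothesis inside the max: for each $x_{t+1}$,
\[
\Lower^b((x_i,y_i)_{i\le t}, x_{t+1}) \le \Upper^b((x_i,y_i)_{i\le t}) + e^{-\eps}\Lower^{1-b}((x_i,y_i)_{i\le t}, x_{t+1}).
\]
Since $\Upper^b((x_i,y_i)_{i\le t})$ does not depend on $x_{t+1}$, taking $\max_{x_{t+1}}$ commutes cleanly through the additive decomposition. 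Summing over $y_t$, the $\Lower^{1-b}$ term reassembles into $\Lower^{1-b}((x_i,y_i)_{i<t}, x_t)$ via its own recursion, and the $\Upper^b$ term telescopes into $\Upper^b((x_i,y_i)_{i<t})$ by the marginalization identity $\sum_{y_t} M^c((y_i)_{i\le t}, (x_i)_{i\le t}) = M^c((y_i)_{i<t}, (x_i)_{i<t})$.

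The only nontrivial step is the pointwise base-case inequality, and within it the only delicate regime is the middle range $e^{-\eps}p_{1-b} < p_b < e^\eps p_{1-b}$, where both $\max$'s vanish and the inequality collapses to the requirement $p_b \ge e^{-\eps}p_{1-b}$ --- which is precisely the case assumption. In the other two regimes the algebra is immediate, and in the regime $p_{1-b}\ge e^\eps p_b$ the $p_b$ terms on the right-hand side cancel exactly, yielding equality. Once the pointwise inequality is in hand, the rest of the proof is disciplined bookkeeping: commuting sums with maxes (justified by the $x_{t+1}$-independence of $\Upper^b$) and repeatedly invoking the two marginalization identities. The induction then terminates at $t = 1$, giving the claim for all $t$ in the stated range.
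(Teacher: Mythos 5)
Your proposal is correct and follows essentially the same route as the paper: a downward induction whose base case is the final-round inequality (obtained pointwise in $y_T$ and then summed via the marginalization identity $\sum_{y_T}M^c((y_i)_{i\le T},(x_i)_{i\le T})=M^c((y_i)_{i<T},(x_i)_{i<T})$), followed by the same inductive step that pulls the $x_{t+1}$-independent $\Upper^b$ term out of the max and re-sums over $y_t$. The only cosmetic difference is that you verify the base-case pointwise inequality by a three-regime case split, while the paper uses the identity $\max\{a,0\}=a+\max\{-a,0\}$ together with $e^{-\eps}\le e^{\eps}$; your explicit placement of the base case at the last query index also matches the paper's actual computation (despite its $t=T-1$ labeling).
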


\begin{proof}
We prove this claim by \emph{downward} induction on $t$. For the case $t=T-1$, we have by definition that
\begin{align}
&~~~~ e^{-\eps} \Lower^{1-b}((x_i,y_i)_{i<t},x_{t}) \notag \\
& = \sum_{y_t\in\calY} \max\left\{ e^{-\eps} M^{1-b}((y_i),(x_i)) - M^{b}((y_i),(x_i)), 0 \right\} \label{eq:ded-1-def}\\
& = \sum_{y_t\in \calY} e^{-\eps} M^{1-b}((y_i),(x_i)) - M^{b}((y_i),(x_i))  + \notag \\
&~~~~ \sum_{y_t\in \calY} \max\left\{ M^b((y_i),(x_i)) - e^{-\eps}M^{1-b}((y_i),(x_i)), 0 \right\} \label{eq:ded-2-trick} \\
&\ge - \Upper^b((x_i,y_i)_{i<t}) + \Lower^{b}((x_i,y_i)_{i<t},x_t). \label{eq:ded-3-def}
\end{align}

We justify the deductions briefly. \eqref{eq:ded-1-def} is by definition. \eqref{eq:ded-2-trick} uses a simple trick that $\max\left\{ a,0 \right\} = a + \max\left\{ -a,0 \right\}$. The last step \eqref{eq:ded-3-def} is by definition again. In particular, we observe that for every $x_T\in \calX$, it holds that
\[
\sum_{y_T\in \calY} M^b((y_i),(x_i)) = \prod_{i=1}^{T-1} \Pr[\calM^{b}((x_j,y_j)_{j < i}, x_i) = y_i] .
\]
This proves the base case for $t = T-1$.

Assume the claim is true for $t + 1\le T-1$. We consider the case of $t$. We have
\[
\begin{aligned}
\Lower^b((x_i,y_i)_{i<t},x_{t}) 
&= \sum_{y_t} \max_{x_{t+1}} \{ \Lower^b((x_i,y_i)_{i\le t}, x_{t+1}) \} \\
&\le \sum_{y_t} \max_{x_{t+1}} \{ \Upper^b((x_i,y_i)_{i\le t}) + e^{-\eps} \Lower^{1-b}((x_i,y_i)_{i\le t},x_{t+1}) \} \\
&\le \sum_{y_t} \Upper^b((x_i,y_i)_{i\le t}) + e^{-\eps}  \max_{x_{t+1}}\{\Lower^{1-b}((x_i,y_i)_{i\le t},x_{t+1})\} \\
&= \Upper^b((x_i,y_i)_{i<t}) + e^{-\eps} \Lower^{1-b}((x_i,y_i)_{i<t},x_t).
\end{aligned}
\]
The first inequality is due to the induction hypothesis. The second inequality is straightforward. This completes the proof for the claim.
\end{proof}

\medskip\noindent\textbf{The construction.} We are ready to describe the construction. In the following, we will assume $\eps > 0$. Having shown the construction for every $\eps > 0$, the case for $\eps = 0$ can be argued by continuity. We will construct $\calE^0,\calE^1$ by specifying for every $t\in [T]$ and $(x_i,y_i)_{i\le t}$ the following:
\[
E^b((y_i)_{i\le t},(x_i)_{i\le t}) := \prod_{i=1}^{t} \Pr[\calE^b((x_j,y_j)_{j<i}, x_i) = y_i].
\]
Note that a valid $E^b(\cdot)$ uniquely defines a system $\calE^b$. For brevity, we also define $E^0(\emptyset) = E^1(\emptyset) = 1$. Intuitively, we use $\emptyset$ to denote two ``empty lists'' (i.e., two lists $(y_i)_{i\le t}, (x_i)_{i\le t}$ with $t = 0$).

We will construct $E^b((y_i)_{i\le t},(x_i)_{i\le t})$ for $t=1,2,\dots, T$ in order. Throughput the construction, we maintain the following property. For every $0\le t\le T$, $(x_i,y_i)_{i\le t}$ and $b\in \{0,1\}$, we require
\begin{align}
    \delta \cdot  E^b((y_i)_{i\le t},(x_i)_{i\le t}) \ge \begin{cases}
     \max_{x_{t+1}\in \calX}\{\Lower^b((x_j,y_j)_{j\le t}, x_{t+1})\} & t < T \\
     \max \left\{ M^b((y_i),(x_i)) - e^\eps M^{1-b}((y_i),(x_i)), 0 \right\} & t = T
    \end{cases} \label{eq:require-1}
\end{align}
and
\begin{align}
\delta\cdot  E^b((y_i)_{i\le t},(x_i)_{i\le t}) \le \Upper^b((x_i,y_i)_{i\le t}) + e^{-\eps} \delta \cdot  E^{1-b}((y_i)_{i\le t}, (x_i)_{i\le t}). \label{eq:require-2}
\end{align}

Meanwhile, for $E^b((y_i),(x_i))$ to describe a valid system, it is necessary and sufficient for it to be non-negative and satisfy the following equation for every $(x_i,y_i)_{i\le t}\in (\calX\times \calY)^{t}$ and $x_{t+1}$:
\begin{align}
\sum_{y_{t+1}\in \calY}E^b((y_i)_{i\le t+1},(x_i)_{i\le t+1}) = E^b((y_i)_{i\le t}, (x_i)_{i\le t}). \label{eq:require-3}
\end{align}

Next, we shall prove that we can construct a valid $E^{0/1}$ satisfying \eqref{eq:require-1}, \eqref{eq:require-2} and \eqref{eq:require-3}. As we have said, we will construct $E^b$ gradually in the increasing order of $t\in [T]$. For $t = 0$, we have set $E^0(\emptyset) = E^1(\emptyset) = 1$. \eqref{eq:require-1} holds by Claim~\ref{claim:start-good}, and \eqref{eq:require-2} holds trivially.

\newcommand{\tE}{\widetilde{E}}

Now let $t<T$. Also let $(y_i)_{i\le t}\in \calY^t, (x_i)_{i\le t}\in \calX^t$ be two lists. Suppose we have constructed $E^{0/1}((y_i)_{i\le t},(x_i)_{i\le t})$ that satisfies \eqref{eq:require-1} and \eqref{eq:require-2}. For every $x_{t+1}\in \calX$ and $y_{t+1}\in \calY$, we construct $E^{0/1}((y_i)_{i\le t+1},(x_i)_{i\le t+1})$ in the following.

Fix $x_{t+1}\in \calX$. We temporarily set
\[
\tE^b((y_i)_{i\le t+1},(x_i)_{i\le t+1}) = \frac{1}{\delta}\cdot \begin{cases}
     \max_{x_{t+2}\in \calX}\{\Lower^b((x_j,y_j)_{j\le t+1}, x_{t+2})\} & t+1 < T \\
     \max \left\{ M^b((y_i),(x_i)) - e^\eps M^{1-b}((y_i),(x_i)), 0 \right\} & t+1 = T
    \end{cases}. 
\]
By Claim~\ref{claim:up-good}, we know that $\tE^b$ satisfies \eqref{eq:require-2}. By the construction, $\tE^b$ satisfies \eqref{eq:require-1}. However, $\tE^b$ may fail to satisfy \eqref{eq:require-3}. Still, we have
\[
\sum_{y_{t+1}} \tE^{b}((y_i)_{i\le t+1},(x_i)_{i\le t+1}) \le \frac{1}{\delta} \Lower^b((x_i,y_i)_{i\le t}, x_{t+1}) \le E^b((y_i)_{i\le t},(x_i)_{i\le t}).
\]
In the following, we show that one can adjust $\tE^{b}$ by increasing some $\tE^{b}((y_i)_{i\le t+1},(x_i)_{i\le t+1})$ properly, so that the new $\tE^b$ satisfies all of \eqref{eq:require-1}, \eqref{eq:require-2} and \eqref{eq:require-3}. 

To begin with, we define for each $b\in \{0,1\}$ the quantity
\begin{align}
    \mathrm{Gap}_b :=  E^b((y_i)_{i\le t},(x_i)_{i\le t}) - \sum_{y_{t+1}} \tE^{b}((y_i)_{i\le t+1},(x_i)_{i\le t+1}).
\end{align}
Our goal is to decrease $\mathrm{Gap}_0,\mathrm{Gap}_1$ to zero by increasing $\tE$. Since we only increase $\tE^b((y_i)_{i\le t+1},(x_i)_{i\le t+1})$, \eqref{eq:require-1} can never be compromised and we only need to consider \eqref{eq:require-2}. Consider $y_{t+1}$ and $b\in \{0, 1\}$. We say that $\tE^{b}$ is \emph{tight} at $y_{t+1}$, if 
\begin{align}
\delta \tE^b((y_i)_{i\le t+1},(x_i)_{i\le t+1}) = \Upper^b((x_i,y_i)_{i\le t+1}) + e^{-\eps} \delta \tE^{1-b}((y_i)_{i\le t+1}, (x_i)_{i\le t+1}). \notag
\end{align}
Intuitively, $\tE^b$ being tight at $y_{t+1}$ means that we cannot increase $\tE^b((y_i)_{i\le t+1},(x_i)_{i\le t+1})$ without increasing $\tE^{1-b}((y_i)_{i\le t+1}, (x_i)_{i\le t+1})$.

Here shows our adjustment strategy. We consider each $y_{t+1}\in \calY$ in an arbitrary but fixed order. For each $y_{t+1}$, we gradually increase $ \tE^{0/1}((y_i)_{i\le t+1},(x_i)_{i\le t+1})$ until one of the following events happens.
\begin{itemize}
    \item Both $\tE^{0}$ and $\tE^1$ get tight at $y_{t+1}$.
    \item $\mathrm{Gap}_0 = 0$, and $\tE^1$ is tight at $y_{t+1}$.
    \item $\mathrm{Gap}_1 = 0$, and $\tE^0$ is tight at $y_{t+1}$.
    \item $\mathrm{Gap}_0 = \mathrm{Gap}_1 = 0$.
\end{itemize}
It is easy to see that if none of the above happens, we can keep increasing $\tE^{0/1}$ at $y_{t+1}$\footnote{To see this, note that for $b\in \{0,1\}$, we can keep increasing $\widetilde{E}^b$ until either (1) $\widetilde{E}^b$ gets tight at $y_{t+1}$, or (2) $\mathrm{Gap}_b = 0$. Therefore, if we cannot increase both $\widetilde{E}^0$ and $\widetilde{E}^1$, it must be one of the four cases above.}. This completes the description of the adjustment strategy.

Now, we claim that after the adjustment, we must have $\mathrm{Gap}_0 = \mathrm{Gap}_1 = 0$. Suppose it is not the case. For example, suppose $\mathrm{Gap}_0 \ne 0$. Then we know that $\tE^0$ is tight at every $y_{t+1}$. This means that
\[
\sum_{y_{t+1}} \delta \tE^{0}((y_i)_{i\le t+1},(x_i)_{i\le t+1}) = \Upper^0((x_i,y_i)_{i\le t}) + e^{-\eps} \sum_{y_{t+1}} \delta \tE^{1}((y_i)_{i\le t+1},(x_i)_{i\le t+1}).
\]
Recall that
\[
\delta E^0((y_i)_{i\le t},(x_i)_{i\le t}) \le \Upper^0((x_i,y_i)_{i\le t}) + e^{-\eps} \delta E^{1}((y_i)_{i\le t}, (x_i)_{i\le t}).
\]
Subtracting the inequality with the equality above, we deduce that $\mathrm{Gap}_0 \le e^{-\eps} \mathrm{Gap}_1$. It implies that $\mathrm{Gap}_1 > 0$. And we can use a symmetric argument to show that $\mathrm{Gap}_1 \le e^{-\eps}\mathrm{Gap}_0$. Since $e^{-\eps} < 1$, the only solution to the system of inequalities is $\mathrm{Gap}_0 = \mathrm{Gap}_1 = 0$, a contradiction!

Having proven the claim, we know there is a way to adjust $\tE^{0/1}$ so that they satisfy all of \eqref{eq:require-1}, \eqref{eq:require-2}, \eqref{eq:require-3}. We then set $E^{0/1}((y_i)_{i\le t+1},(x_i)_{i\le t+1})$ to be $\tE^{0/1}((y_i)_{i\le t+1},(x_i)_{i\le t+1})$ and finish the construction for $(x_i,y_i)_{i\le t}$ and $x_{t+1}$.

We use the construction above for $t =0,1,\dots, T-1$ in order to construct $E^{0/1}$. It remains to verify that $E^{0/1}$ satisfies the lemma statement. It suffices to verify for every $(x_i,y_i)_{i\le T} \in (\calX\times \calY)^T$ and $b\in \{0,1\}$ that
\[
M^{b}((y_i),(x_i)) \ge \delta E^b((y_i),(x_i))
\]
and
\[
(M^{b}((y_i),(x_i)) - \delta E^b((y_i),(x_i))) \le e^{\eps} (M^{1-b}((y_i),(x_i)) - \delta E^{1-b}((y_i),(x_i))).
\]
In fact, since $e^{\eps} > 1$, it suffices to verify the second inequality for $b\in \{0,1\}$. This can be verified by utilizing \eqref{eq:require-2}: note that $\Upper^b((x_i,y_i)_{i\le T}) = M^b((y_i),(x_i)) - e^{-\eps} M^{1-b}((y_i),(x_i))$, and \eqref{eq:require-2} tells us that
\[
\delta E^b((y_i),(x_i)) \le M^b((y_i),(x_i)) - e^{-\eps} M^{1-b}((y_i),(x_i)) + e^{-\eps} \delta E^{1-b}((y_i),(x_i)).
\]
Re-arranging proves the desired inequality.
\end{proof}

\begin{remark}
Note that to verify the correctness of $E^{0/1}$, we only used the condition \eqref{eq:require-2}. It seems that \eqref{eq:require-1} is useless in this proof. However, note that it is possible that $\Upper^b((y_i),(x_i))$ is negative for some $(x_i,y_i)_{i\le t}$, which makes it unclear whether \eqref{eq:require-2} can always be satisfied by a positive valuation of $E$. This is why we need the other control function $\Lower$.
\end{remark}

\subsubsection{Wrap-up}

Next, we quickly prove Lemma~\ref{lemma:system-subtraction}.

\begin{reminder}{Lemma~\ref{lemma:system-subtraction}}
Suppose $\calM, \calE$ are two systems such that for every adversary $\calA$, it holds that $\IT(\calA :\calM) \ge \delta \IT(\calA : \calE)$. Then there is a system $\calN$ such that for every adversary $\calA$, it holds that
\[
\begin{aligned}
\IT(\calA : \calM) \equiv \delta \IT(\calA : \calE)  + (1-\delta) \IT(\calA : \calN).
\end{aligned}
\]
\end{reminder}

\begin{proof}
We follow the notation in Section~\ref{sec:key-lemma}. Namely, for each $(x_i,y_i)_{i\le t}$, define 
\[
M((y_i)_{i\le t},(x_i)_{i\le t}) = \prod_{i=1}^t \Pr[\calM((x_j,y_j)_{j<i},x_i) = y_i].
\]
Also define the same notation for $E$. Then we construct
\[
N((y_i)_{i\le t},(x_i)_{i\le t}) = \frac{1}{1-\delta} \left( M((y_i)_{i\le t},(x_i)_{i\le t}) - \delta E((y_i)_{i\le t},(x_i)_{i\le t}) \right).
\]
Since $\calM \ge \delta \calE$, we know that $N((y_i),(x_i))$ is always non-negative. Moreover, $N$ encodes a valid system because
\[
\begin{aligned}
& ~~~~ \sum_{y_{t+1}\in \calY}N((y_i)_{i\le t+1},(x_i)_{i\le t+1}) \\
&=  \frac{1}{1-\delta} \sum_{y_{t+1}\in \calY}  M((y_i)_{i\le t+1},(x_i)_{i\le t+1}) - \delta E((y_i)_{i\le t+1},(x_i)_{i\le t+1})  \\
&= \frac{1}{1-\delta} \left( M((y_i)_{i\le t},(x_i)_{i\le t}) - \delta E((y_i)_{i\le t},(x_i)_{i\le t}) \right) \\
&= N((y_i)_{i\le t},(x_i)_{i\le t}).
\end{aligned}
\]
Finally, it is easy to verify $\IT(\calA : \calM) \equiv \delta \IT(\calA : \calE) + (1-\delta) \IT(\calA:  \calM)$.
\end{proof}

As we have shown in Section~\ref{sec:proof-approximate}, combining Lemma~\ref{lemma:decomposition}, \ref{lemma:system-subtraction} and \ref{lemma:pure-decompose} together, we can prove Theorem~\ref{theo:decomposition} easily. Next, we show how Theorem~\ref{theo:decomposition} implies Theorem~\ref{theo:approximate}.

\begin{proofof}{Theorem~\ref{theo:approximate}}
Let $\calM_1,\dots, \calM_k$ be $k$ mechanisms, where $\calM_i$ is $(\eps_i,\delta_i)$-approximate differentially private. We assume without loss of generality that all of $\calM_i$'s hold a bit $b\in \{0,1\}$ as the sensitive data.

Let $\calA$ be an arbitrary adversary interacting with $\COMP(\calM_1,\dots, \calM_k)$. Next, we show how one can simulate $\IT(\calA, \COMP(\calM_1^b,\dots, \calM_k^b))$ by running $k$ (approximate) randomized response mechanisms. For each $\calM_i^b$, construct an approximate randomized response mechanism $\RR^b_{\eps_i,\delta_i}$. The output distribution of $\RR^b_{\eps_i,\delta_i}$ is:
\[
\RR^b_{\eps_i,\delta_i} = \begin{cases}
(b, \top) & \text{w.p. $\delta$} \\
(b, \perp) & \text{w.p. $(1-\delta) \frac{e^\eps}{1+e^\eps}$} \\
(1-b, \perp) & \text{w.p. $(1-\delta) \frac{1}{1+e^{\eps}}$}
\end{cases}.
\]
We also prepare the decomposition of $\calM_i^{0/1}$ with $\calN_i^{0/1},\calE_i^{0/1}$ as promised by Theorem~\ref{theo:decomposition}.

Now, we construct a simulator $\calS$ as follows. For each $i\in [k]$, $\calS$ runs $\RR^b_{\eps_i,\delta_i}$ and gets a pair $(b_i, \sigma_i)$. If $\sigma_i = \top$, then let $\calB_i \gets \calE_i^{b_i}$. Otherwise, let $\calB_i \gets \calN_i^{b_i}$. In this way, $\calS$ gets a list of $k$ systems $(\calB_1,\dots, \calB_k)$. The simulator then simulates the interaction between $\calA$ and $\calB_1,\dots, \calB$, and outputs the interaction history. Let $\Output(\calS, b)$ denote the output distribution of $\calS$. We claim that
\begin{align}
\Output(\calS, b) \equiv \IT(\calA : \calM^b_1,\dots, \calM^b_k). \label{eq:to-prove}
\end{align}
To see this, for each $\calM^b_i$, consider a two-party communication, where one party is $\calM^b_i$, and the other party consists of $\calA$ and $\calM^b_{j}$ for $j\ne i$. The second party simulates all the interactions between $\calA$ and $\calM^{b}_{j}$, and only sends queries to $\calM^b_i$ when $\calA$ queries $\calM^b_i$. From the second party's viewpoint, $\calM^b_i$ looks identical to $ \delta_i\calE_i^b + (1-\delta_i)\frac{e^\eps}{1+e^\eps}\calN_i^b + (1-\delta_i)\frac{1}{1+e^\eps} \calN_i^{1-b}$. Therefore,
\[
\IT(\calA : \calM^b_1,\dots, \calM^b_i,\dots,\calM^b_k) \equiv \sum_{j=1}^3 p_j \cdot \IT(\calA : \calM^b_1,\dots, \calM^b_{i,j},\dots,\calM^b_k).
\]
Here, we use $(p_1,p_2,p_3) = (\delta_i,(1-\delta_i)\frac{e^\eps}{1+e^\eps}, (1-\delta_i) \frac{1}{1+e^\eps})$ and $(\calM^b_{i,1},\calM^b_{i,2},\calM^b_{i,3}) =(\calE_i^b, \calN_i^b,\calN_i^{1-b})$ for convenience. Applying this decomposition for every $i\in [k]$ proves \eqref{eq:to-prove}.

Finally, note that $\Output(\calS, b)$ is just a post-processing of the sequential composition of $k$ (approximate) randomized response mechanisms. Hence, the optimal sequential composition theorem holds for $\Output(\calS,b)$, which completes the proof.
\end{proofof}

\subsection{Proofs for R\'enyi Differential Privacy}\label{append:renyi}

In this section, we show omitted proofs for Theorem~\ref{theo:renyi}. 

\subsubsection{Preliminaries}

We need some technical preparations first. Consider a measure space $(X,\mu)$. For two measurable functions $f,g$, define their inner product as
\[
\langle f, g\rangle_{\mu} = \int f\cdot g d\mu.
\]
For a real $\alpha\ge 1$, define the $\ell_{\alpha}$-norm of a function $f$ as
\[
\| f\|_{\mu,\alpha} := \left( \int f^{\alpha} d\mu\right)^{1/\alpha}.
\]
Recall H\"older's inequality, which is essential for our proof.
\begin{fact}
Suppose $\alpha,\beta \ge 1$ are H\'older conjugates of each other (i.e., $\frac{1}{\alpha} + \frac{1}{\beta} = 1$). Suppose $f, g$ are two measurable functions. Then we have
\[
\langle f, g\rangle_{\mu} \le \| f\|_{\mu, \alpha} \cdot \| g\|_{\mu,\beta}.
\]
The inequality is sharp in the sense that for every measurable function $f$, we have
\[
\| f \|_{\mu,\alpha} = \sup_{h:h\not\equiv 0} \frac{\langle f, h\rangle_{\mu}}{\|h\|_{\mu,\beta}}.
\]
\end{fact}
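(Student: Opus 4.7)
The plan is to prove the two parts separately: the H\"older inequality first, then the sharpness claim. Both are classical, so I will only outline the approach rather than grind out the details.

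For the inequality itself, I would begin by dispensing with the trivial cases: if $\|f\|_{\mu,\alpha}=0$ then $f=0$ almost everywhere and the left side vanishes, while if either norm is $\infty$ the right side is $\infty$. After that I would normalize by scaling $f \mapsto f/\|f\|_{\mu,\alpha}$ and $g \mapsto g/\|g\|_{\mu,\beta}$, reducing to the case $\|f\|_{\mu,\alpha}=\|g\|_{\mu,\beta}=1$. The workhorse is Young's inequality: for non-negative reals $a,b$ and conjugate exponents with $\frac{1}{\alpha}+\frac{1}{\beta}=1$, one has $ab \le \frac{a^\alpha}{\alpha}+\frac{b^\beta}{\beta}$, which is an immediate consequence of the concavity of $\log$ (equivalently, weighted AM--GM). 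Applying this pointwise with $a=|f(x)|$, $b=|g(x)|$ and integrating yields $\int |fg|\,d\mu \le \frac{1}{\alpha}\int |f|^\alpha\,d\mu + \frac{1}{\beta}\int |g|^\beta\,d\mu = \frac{1}{\alpha}+\frac{1}{\beta}=1$, which is the H\"older bound in the normalized regime. Unscaling recovers the general statement $\langle f,g\rangle_\mu \le \|f\|_{\mu,\alpha}\|g\|_{\mu,\beta}$.

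For the sharpness half, I would exhibit an explicit near-maximizer $h$. In the non-negative setting relevant to the paper (see the hypothesis on $h$ in Lemma~\ref{lemma:renyi-charact}), I would take $h := f^{\alpha-1}$. The conjugacy identity $(\alpha-1)\beta = \alpha$ then gives $\|h\|_{\mu,\beta}^\beta = \int f^{(\alpha-1)\beta}\,d\mu = \int f^\alpha\,d\mu = \|f\|_{\mu,\alpha}^\alpha$, so $\|h\|_{\mu,\beta} = \|f\|_{\mu,\alpha}^{\alpha-1}$. At the same time $\langle f,h\rangle_\mu = \int f^\alpha\,d\mu = \|f\|_{\mu,\alpha}^\alpha$, so the ratio $\langle f,h\rangle_\mu / \|h\|_{\mu,\beta}$ equals $\|f\|_{\mu,\alpha}$. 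Combined with the H\"older bound already proved, which supplies the reverse inequality for every admissible $h \not\equiv 0$, this shows the supremum is attained and equals $\|f\|_{\mu,\alpha}$ exactly. For general (possibly signed) $f$ one would replace this choice by $h = |f|^{\alpha-1}\operatorname{sgn}(f)$ and repeat.

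The main thing to be careful about is the edge cases, but none of them is substantive. If $\|f\|_{\mu,\alpha}=0$, both sides of the sharpness identity vanish trivially; if $\|f\|_{\mu,\alpha}=\infty$, I would produce a sequence of witnesses by truncating $f$ to sets of finite measure where $|f|$ is bounded, apply the construction above on each truncation, and let the truncation grow so that the ratio tends to $\infty$. I do not foresee any real obstacle: H\"older's inequality and its duality characterization are entirely standard, and the only design choice is how much measure-theoretic detail to include, given that the paper only applies the fact in benign (finite alphabet, non-negative) settings.
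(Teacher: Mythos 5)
Your proof is correct: Young's inequality plus normalization gives the bound, and the witness $h=|f|^{\alpha-1}\operatorname{sgn}(f)$ (or $h=f^{\alpha-1}$ in the non-negative setting the paper actually uses) gives the sharpness, which is the standard duality argument. The paper itself states this Fact without proof, treating it as classical background, so there is nothing to compare against beyond noting that your argument is exactly the textbook one and covers the only setting the paper needs (finite spaces, non-negative functions, $1<\alpha<\infty$).
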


Recall our definition of dominance. For two measures $P,Q$ on a space $\calY$, we say that $P$ is $\beta$-dominated by $Q$, denoted by $P\preceq_{\beta} Q$, if for every measurable function $f:\calY\to \mathbb{R}^{\ge 0}$, it holds that $\| f\|_{P, 1} \le \| f\|_{Q,\beta}$.

\subsubsection{Proof for lemmas}

We are ready to show the proofs now. We start with Lemma~\ref{lemma:renyi-charact}.

\begin{reminder}{Lemma~\ref{lemma:renyi-charact}}
Suppose $P, Q$ are two distributions supported on $\calY$. For every $\alpha > 1$ and $B\ge 0$, let $\beta = \frac{\alpha}{\alpha - 1}$ be the H\"older conjugate of $\alpha$. The following statements are equivalent.
\begin{itemize}
    \item $D_{\alpha}(P\| Q) \le B$.
    \item For every function $h : \calY\to \mathbb{R}^{\ge 0}$, it holds that $\Ex_{y\sim P}[h(y)]\le e^{\frac{B(\alpha-1)}{\alpha}} \Ex_{y\sim Q}[h(y)^{\beta}]^{1/\beta}$.
\end{itemize}
\end{reminder}

\begin{proof}
First, if there is $y\in \calY$ such that $0 = \Pr[Q=y] < \Pr[P = y]$, then we have $D_{\alpha}(P\| Q) = \infty$ and Condition 2 does not hold for any $B < \infty$. In the following, we assume $\supp(P) = \supp(Q) = \calY$. Note that in this case, we have $D_{\alpha}(P\| Q) < \infty$.

We write $P(y), Q(y)$ as shorthands for $\Pr[P=y]$ and $\Pr[Q=y]$ for brevity. Now, note that $D_{\alpha}(P\|Q) \le B$ is equivalent to $e^{D_{\alpha}(P\| Q)}\le e^B$, which is further equivalent to
\[
\Ex_{y\sim Q}\left[\frac{P(y)^{\alpha}}{Q(y)^{\alpha}}\right]^{1/\alpha} = \Ex_{y\sim P}\left[\frac{P(y)^{\alpha-1}}{Q(y)^{\alpha-1}}\right]^{1/\alpha} \le e^{\frac{B(\alpha-1)}{\alpha}}.
\]
Consider the measure space $M=(\calY, Q)$. By Holder's inequality, we have
\[
\Ex_{y\sim Q}\left[\left(\frac{P(y)}{Q(y)}\right)^{\alpha}\right]^{1/\alpha} = \left\| \frac{P}{Q} \right\|_{Q,\alpha} = \sup_{h:h\not\equiv 0} \left\{ \frac{\langle h, \frac{P}{Q}\rangle_Q }{\| h \|_{Q,\beta}} \right\}.
\]
Moreover, since $\frac{P(y)}{Q(y)}$ is non-negative, it suffices to consider only non-negative $h$ in the supremum above. Now we are ready to verify the equivalence.
\begin{itemize}
    \item If Condition $1$ holds, we have
    \[
    \sup_{h:h\not\equiv 0} \left\{ \frac{\langle h, \frac{P}{Q}\rangle_Q }{\| h \|_{Q,\beta}} \right\} =  \left\| \frac{P}{Q} \right\|_{Q,\alpha} \le e^{\frac{B(\alpha-1)}{\alpha}}.
    \]
    Therefore, for every $h:\calY\to \mathbb{R}^{\ge 0}$, it holds that 
    \[
    \Ex_{y\sim P}[h(y)] = \Ex_{y\sim Q}\left[ h(y)\cdot \frac{P(y)}{Q(y)} \right] \le \left\| \frac{P}{Q} \right\|_{Q,\alpha}  \| h \|_{Q,\beta} \le e^{\frac{B(\alpha-1)}{\alpha}} \Ex_{y\sim Q}[h(y)^{\beta}]^{1/\beta}.
    \]
    \item On the other hand, if Condition $2$ holds, we have
    \[
    \left\| \frac{P}{Q} \right\|_{Q,\alpha} = \sup_{h:h\not\equiv 0} \left\{ \frac{\langle h, \frac{P}{Q}\rangle_Q }{\| h \|_{Q,\beta}} \right\} \le e^{\frac{B(\alpha-1)}{\alpha}}.
    \]
\end{itemize}
This completes the proof.
\end{proof}

The next lemma is Lemma~\ref{lemma:renyi-tracker}.

\begin{reminder}{Lemma~\ref{lemma:renyi-tracker}}
Let $\calY_1\times \calY_2$ be a space. Consider two distributions $P,Q$ on $\calY_1 \times \calY_2$. Assume $\supp(P) = \supp(Q) = \calY_1\times \calY_2$. Let $P_1,P_2$ be the margin of $P$ on $\calY_1,\calY_2$. For each $y_1\in \calY_1$, denote by $P_2|_{P_1=y_1}$ the marginal distribution of $y_2$ conditioning on $y_1$. Also define the same notation for $Q$. 

Let $\beta \ge 1, B\ge 0$ be two reals. For each $y_1\in \calY_1$, define 
\[
\ell_1(y_1) = \inf_K \left\{ K : P_2|_{P_1 = y_1} \preceq_{\beta} K \cdot  Q_2|_{Q_1 = y_1} \right\}.
\]
Suppose $P\preceq_\beta e^B Q$. Consider the measure spaces $(\calY_1,P_1(y_1)\cdot \ell_1(y_2)^{1/\beta})$ and $(\calY_1,Q_2)$. We have
\[
P_1\ell_1^{1/\beta} \preceq_{\beta} e^B Q_1.
\]
\end{reminder}
\begin{proof}
Suppose by contradiction that the conclusion of the lemma does not hold. That is, there is a function $g:\calY_1 \to \mathbb{R}^{\ge 0}$ such that 
\[
\|g \|_{P_1\ell_1^{1/\beta}, 1} > \| g \|_{e^B Q_1, \beta}.
\]
In the following, we show this contradicts with $P\preceq_\beta e^B Q$. First off, for each $y_1\in \calY_1$, by the definition of $\ell_1(y_1)$, there is a function $f_{y_1}:\calY_2\to \mathbb{R}^{\ge 0}$ such that 
\[
\| f_{y_1} \|_{P_2|_{P_1=y_1}, 1} = \int f_{y_1} dP_2|_{P_1 = y_1} = \left( \int f_{y_1}^{\beta} d(\ell_1(y_1)Q_2|_{Q_1 = y_1}) \right)^{1/\beta} = \|f_{y_1}\|_{\ell_1(y_1)Q_2|_{Q_1=y_1}, \beta}.
\]
By scaling $f_{y_1}$ properly, we can ensure that $\| f_{y_1} \|_{P_2|_{P_1=y_1}, 1} = \ell_1(y_1)^{1/\beta}$. Consequently, we have
\[
\|f_{y_1}\|_{Q_2|_{Q_1=y_1}, \beta} = \|f_{y_1}\|_{\ell_1(y_1)Q_2|_{Q_1=y_1}, \beta} \cdot \ell_1(y_1)^{-1/\beta} = 1.
\]

Define a new function $f:\calY_1\times \calY_2\to \mathbb{R}^{\ge 0}$ as $f(y_1,y_2) = g(y_1)\cdot f_{y_1}(y_2)$. Then, we have
\begin{align}
\| f \|_{P,1}
&= \iint  f(y_1,y_2) dP \notag \\
&= \int \left( \int f_{y_1}(y_2) dP_2|_{P_1 = y_1} \right) g(y_1) dP_1 \notag \\
&= \int  g(y_1) d(\ell_1^{1/\beta} P_1) \notag \\
& > \left( \int g(y_1)^{\beta} d(e^B Q_1) \right)^{1/\beta} \notag \\
&=  \left( \int  g(y_1)^{\beta} \|f_{y_1}\|_{Q_2|_{Q_1=y_1}, \beta}^\beta  d(e^B Q_1) \right)^{1/\beta} \notag \\
&= \left( \int \left( g(y_1)^{\beta}  \int f_{y_1}(y_2)^{\beta} d(Q_2|_{Q_1 = y_1})   \right) d(e^B Q_1) \right)^{1/\beta} \notag \\
&= \left(\iint g(y_1)^{\beta} f_{y_1}(y_2)^{\beta}  d(e^B Q)\right)^{1/\beta} \notag \\
&= \left(\iint f^{\beta}  d(e^B Q)\right)^{1/\beta} = \| f \|_{e^B Q, \beta}.
\end{align}
This contradicts to the assumption that $P\preceq e^B Q$. Therefore, we conclude that such function $g$ does not exist and $P_1\ell_1^{1/\beta} \preceq e^B Q_1$.
\end{proof}

\subsubsection{Proof of the composition theorem}

We prove the following theorem, which is equivalent to Theorem~\ref{theo:renyi}.

\begin{theorem}\label{theo:renyi-2}
Let $\calM_1,\calM_2$ be two interactive mechanisms that run on the same data set. Suppose that $\calM_1$,$\calM_2$ are $(\alpha,\eps_1)$, $(\alpha,\eps_2)$-R\'enyi DP, respectively. Then $\COMP(\calM_1,\calM_2)$ is $(\alpha, \eps_1+\eps_2)$-R\'enyi DP.
\end{theorem}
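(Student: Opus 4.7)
The plan is to prove Theorem~\ref{theo:renyi-2} by directly generalizing the toy $3$-round argument from Section~\ref{sec:proof-renyi}. First fix an arbitrary adversary $\calA$ interacting with $\COMP(\calM_1^b,\calM_2^b)$ for a total of $T$ rounds. Using the vanilla-query trick described at the end of Section~\ref{sec:proof-renyi}, I will assume without loss of generality that $\calA$ alternates its queries between $\calM_1$ and $\calM_2$, so round $t$ is sent to $\calM_{i_t}$ for a fixed schedule $i_1,i_2,\dots,i_T\in\{1,2\}$. Let $P=\IT(\calA:\calM_1^0,\calM_2^0)$ and $Q=\IT(\calA:\calM_1^1,\calM_2^1)$ be the two transcript distributions on $\calY^T$. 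By Lemma~\ref{lemma:renyi-charact}, it suffices to prove $P\preceq_{\beta} e^{\eps_1+\eps_2}Q$ for $\beta=\alpha/(\alpha-1)$, since the symmetric bound on $D_\alpha(Q\|P)$ follows identically.

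Following the toy proof, I will expand $\sum_y P(y)h(y)$ as the nested sum $\sum_{y_1}P_1(y_1)\sum_{y_2}P_2|_{y_1}(y_2)\cdots\sum_{y_T}P_T|_{y_{<T}}(y_T)h(y)$ and peel off one round at a time from the innermost sum outward. The critical observation, exactly as in the toy example, is that the two mechanisms share no state: conditioning on responses coming from $\calM_{3-i_t}$ does not alter the conditional distribution of round $t$'s response, so the inner sum over $y_t$ is a genuine single-round interaction with the conditional mechanism $\calM_{i_t}^{b}|_{H_{i_t}(t-1)}$, where $H_j(s)$ denotes the sub-history consisting of those $(x_r,y_r)$ with $r\le s$ and $i_r=j$. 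Applied to any sub-adversary induced by the inner nested expression, the $(\alpha,\eps_{i_t})$-R\'enyi DP of this conditional mechanism (together with Lemma~\ref{lemma:renyi-charact}) produces exactly an inequality of the form needed to move from a $P$-sum to a $Q$-sum, at the cost of a tracker factor $\ell_t^{1/\beta}$ defined analogously to \eqref{eq:renyi-ell}.

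To combine the trackers across rounds and across the two mechanisms, I will iterate Lemma~\ref{lemma:renyi-tracker}. Concretely, for each $t$ going from $T$ down to $1$, the lemma converts a $\beta$-dominance statement of the form $P_{\le t}\cdot \Phi_t^{1/\beta}\preceq_\beta e^{B_t}Q_{\le t}$ (where $\Phi_t$ is the accumulated tracker so far and $B_t$ is the accumulated budget) into the next statement $P_{\le t-1}\cdot \Phi_{t-1}^{1/\beta}\preceq_\beta e^{B_{t-1}}Q_{\le t-1}$ by spending at most $\eps_{i_t}$ of the privacy budget of mechanism $i_t$. Because a tracker attached to mechanism $j$ depends only on $H_j$ and increments only at rounds $t$ with $i_t=j$, the two mechanisms' tracker budgets never interfere: the total budget accumulated over all $T$ rounds is at most $\eps_1+\eps_2$, which yields the final inequality $\sum_y P(y)h(y)\le e^{\eps_1+\eps_2}\bigl(\sum_y Q(y)h(y)^\beta\bigr)^{1/\beta}$.

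The main obstacle is the bookkeeping of the tracker functions under interleaved queries: one has to verify that at round $t$ the "correct" conditional mechanism appears and that its tracker depends only on $H_{i_t}$, not on responses from the other mechanism. This is precisely what the independence of $\calM_1$ and $\calM_2$ buys us, and is the same conceptual step that makes the $3$-round toy example work. Once the two-mechanism case is established, Theorem~\ref{theo:renyi} for general $k$ follows by a routine induction on $k$, grouping $\calM_1,\dots,\calM_{k-1}$ as a single composed mechanism that is $(\alpha,\sum_{i<k}\eps_i)$-RDP by induction hypothesis and composing it with $\calM_k$ via the two-mechanism case.
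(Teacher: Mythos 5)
Your proposal follows essentially the same route as the paper's own proof: reduce to an alternating query schedule via the vanilla-query trick, invoke Lemma~\ref{lemma:renyi-charact} to recast the goal as a $\beta$-dominance statement, peel off rounds from the innermost sum outward using per-mechanism tracker functions (the paper's $\ell_t$ and $r_t$) combined through Lemma~\ref{lemma:renyi-tracker}, and handle general $k$ by induction on $\COMP(\COMP(\calM_1,\dots,\calM_{k-1}),\calM_k)$. The key observation you highlight---that conditioning on the other mechanism's responses does not alter the conditional mechanism, so each tracker depends only on its own sub-history---is exactly the paper's Claim~\ref{claim:transfer-claim}, so the plan is sound and matches the paper.
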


Theorem~\ref{theo:renyi-2} implies Theorem~\ref{theo:renyi} because we can interpret $\COMP(\calM_1,\dots, \calM_k)$ as $\COMP(\COMP(\calM_1,\dots, \calM_{k-1}),\calM_k)$ and use Theorem~\ref{theo:renyi-2} inductively. Now we prove Theorem~\ref{theo:renyi-2}.

\begin{proof}
Suppose without loss of generality that both mechanisms run on a single sensitive input bit $b\in \{0,1\}$. Also suppose that there are $2T$ rounds of interactions. Starting with $\calM_1$, the adversary communicates with two mechanisms alternately. This is without loss of generality: suppose the adversary $\calA$ can decide the next query object based previous responses. Let $\mathbf{IT}(\calA:\calM_1,\calM_2)$ be the transcript of the interaction between $\calA$ and $\calM_1,\calM_2$. We reduce the interaction to a new protocol where the adversary speaks with two mechanism alternately. Let $\calA'$ denote a modification of $\calA$, defined as follows. $\calA'$ simulates $\calA$ while always alternating between two mechanisms. If the current mechanism is not the one that $\calA$ wants to speak with, $\calA'$ will send a special ``SKIP'' query, and the mechanism responds with an ``ACK'' message. After this round of interaction, $\calA'$ will switch to interact with the other mechanism, which allows it to continue simulating $\calA$. Let $\mathbf{IT}(\calA' : \calM_1,\calM_2)$ denotes the transcript of the new interaction. Therefore, for $b\in \{0,1\}$, it is easy to establish a bijection between $\supp(\mathbf{IT}(\calA' : \calM_1^b,\calM_2^b))$ and $\supp(\mathbf{IT}(\calA : \calM_1^b,\calM_2^b))$. Moreover, the bijection mapping is independent of $b$\footnote{This is to say, suppose $\calM_1^0,\calM_2^0, \calM_1^1,\calM_2^1$ are four systems, then the bijection between $\supp(\mathbf{IT}(\calA : \calM_1^b,\calM_2^b))$ and $\supp(\mathbf{IT}(\calA' : \calM_1^b,\calM_2^b))$ would be the same for $b\in \{0,1\}$.}. Therefore, bounding the divergences between $\mathbf{IT}(\calA : \calM_1^b,\calM_2^b)$, $b\in \{0,1\}$ is equivalent to bounding those between $\mathbf{IT}(\calA' : \calM_1^b,\calM_2^b)$, $b\in \{0,1\}$.

Let $\calY,\calZ$ denote the response domains of $\calM_1,\calM_2$ respectively. Also let $y_1,\dots, y_T$, $z_1,\dots, z_T$ denote the lists of responses returned by $\calM_1$ and $\calM_2$ respectively. We assume that each response $y_i,z_j$ contains a copy of the corresponding query message (so that we can recover the whole interaction history just from the responses). 

Now, fix $\calA$ to be an arbitrary adversary. Let $P,Q \in \Delta((\calY\times \calZ)^T)$ denote the output distributions when $\calA$ interacts with $(\calM^0_1,\calM^0_2)$ and $(\calM^1_1,\calM^1_2)$ respectively. Our goal is to prove that
\[
\max\left\{ D_{\alpha}(P\| Q), D_{\alpha}(Q \| P) \right\} \le \eps_1 + \eps_2.
\]
We bound $D_{\alpha}(P\| Q)$ below. The bound for $D_{\alpha}(Q\|P)$ is symmetric. 

For a distribution $D$, we always use $D(x)$ to denote $\Pr[D = x]$. Write $y = (y_1,\dots, y_T)$ where $y_i$ denotes the $i$-th response. Also write $z = (z_1,\dots, z_T)$ and denote $yz:=(y_1,z_1,\dots, y_T, z_T)$. By Lemma~\ref{lemma:renyi-charact}, it suffices to show that for every $h \colon (\calY\times \calZ)^T \to \mathbb{R}^{\ge 0}$, it holds that
\begin{align}
\sum_{y\in \calY^T,z\in\calZ^T} P(yz) h(yz) \le \left( e^{\eps_1 + \eps_2}\sum_{y\in \calY^T,z\in\calZ^T} Q(yz) h(yz)^{\beta} \right)^{1/\beta} \label{eq:renyi-goal-2}
\end{align}
where $\beta = \frac{\alpha}{\alpha - 1}$ is the H\"older conjugate of $\alpha$. 

For each $i\in [T]$, let $P^y_i, P^z_i$ be the projection of $P$ onto $y_i, z_i$. For each $i\in [T]$, let $y_{\le i}, z_{\le i}$ denote the first $i$ responses from $y$ and $z$. Denote $(yz)_{\le i} = (y_1,z_1,\dots, y_i,z_i)$. Then, let $P^y_i|_{yz_{<i}}$ denote the distribution of $y_i$ conditioning on $(yz)_{<i}$, and $P^z_i|_{yz_{<i},y_i}$ denote the distribution of $z_i$ conditioning on $(yz)_{<i}$ and $y_i$. Also define the same notation for $Q$. Then we write 
\begin{align}
\sum_{y\in \calY^T,z\in\calZ^T} P(yz) h(yz) = \sum_{(yz)_{\le T-1}} P((yz)_{\le T-1}) \sum_{y_T,z_T} P^y_T|_{yz_{<T}}(y_T)P^z_T|_{yz_{<T},y_T}(z_T) h(yz)
\label{eq:renyi-rewrite-2}
\end{align}

For every $t< T$ and every $y_{\le t}$, let $\calM^0_{1}|_{y_{\le t}}$ (resp. $\calM^{1}_1|_{y_{\le t}}$) denote the interactive system $\calM^0_1$ (resp. $\calM^1_1$) conditioning on that it has answered $y_1,\dots, y_{t}$ to the first $t$ queries. Formally, for every $(x_{t+1},y_{t+1}),\dots, (x_{t'},y_{t'})$ and $x_{t'+1}$, define
\[
\calM^b_{1}|_{y_{\le t}}((x_i,y_i)_{t<i\le t'},x_{t'+1}) := \calM^b_1((x_i,y_i)_{1\le i\le t'}, x_{t'+1}).
\]
We also define the same notation for the second mechanism $\calM_2$. Next, define
\begin{align}
\ell_{t}(y_{\le t}) := \exp\left( \sup_{A: \text{adversary}} \left\{ D_{\alpha}\big( \IT(A:\calM^0_1|_{y_{\le t}}) \| \IT(A:\calM^1_1|_{y_{\le t}}) \big) \right\} \right) \label{eq:monitor-ell}
\end{align}
and
\begin{align}
r_{t}(z_{\le t}) := \exp\left( \sup_{A: \text{adversary}} \left\{ D_{\alpha}\big( \IT(A:\calM^0_2|_{z_{\le t}}) \| \IT(A:\calM^1_2|_{z_{\le t}}) \big) \right\} \right). \label{eq:monitor-r}
\end{align}

By the assumed R\'enyi DP guarantee, we have that $\ell_0(\emptyset) \le e^{\eps_1}$ and $r_0(\emptyset) \le e^{\eps_2}$. We claim the following.

\begin{claim}\label{claim:transfer-claim}
For each $t\le T-1$ and $y_{<t},z_{<t}$, consider two measures $P^y_t|_{yz_{<t}}(y) \ell_{t}(y_{<t}\circ y)$ and $Q^y_t|_{yz_{<t}}(y)$ on the space $\calY$ (here $\circ$ denotes concatenation). It holds that 
\[
P^y_t|_{yz_{<t}}(y) \ell_{t}(y_{<t}\circ y)^{1/\beta} \preceq_{\beta} \ell_{t-1}(y_{<t}) Q^y_t|_{yz_{<t}}(y).
\]
A symmetric conclusion holds for $P^z$ and $z$. Namely
\[
P^z_t|_{yz_{<t},y_t}(z) r_{t}(z_{<t}\circ z)^{1/\beta} \preceq_{\beta} r_{t-1}(z_{<t}) Q^z_t|_{yz_{<t},y_t}(z).
\]
\end{claim}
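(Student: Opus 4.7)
The plan is to reduce the claim to a single application of Lemma~\ref{lemma:renyi-tracker} by constructing an auxiliary adversary that interacts with $\calM_1^b|_{y_{<t}}$ alone and whose induced transcript distributions have the ``right'' marginals on the next response and the ``right'' conditional divergences on the remaining interaction. The two inequalities in the claim are symmetric under swapping $\calM_1$ and $\calM_2$ (with $y$'s and $z$'s interchanged and $\ell$ replaced by $r$), so it suffices to prove the first one.

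Fix $y_{<t}, z_{<t}$. For each $y\in\calY$, let $A^*_{y_{<t}\circ y}$ be an adversary that attains the supremum in \eqref{eq:monitor-ell} defining $\ell_t(y_{<t}\circ y)$; such an adversary exists because both the alphabets $\calX,\calY$ and the number of remaining rounds are finite, so the supremum is attained over deterministic strategies (alternatively, one takes near-optimal adversaries and passes to the limit). Build an adversary $\widetilde A$ that interacts only with $\calM_1^b|_{y_{<t}}$ as follows: $\widetilde A$ first samples a query from $\calA$'s conditional distribution over $\calM_1$-queries given the full history $yz_{<t}$ and submits it; upon receiving the response $y_t$, $\widetilde A$ switches to running $A^*_{y_{<t}\circ y_t}$ to drive the remainder of the interaction.

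Set $\tilde P = \IT(\widetilde A : \calM_1^0|_{y_{<t}})$ and $\tilde Q = \IT(\widetilde A : \calM_1^1|_{y_{<t}})$. The key identifications are: (i) the first-response marginals satisfy $\tilde P_1 = P^y_t|_{yz_{<t}}$ and $\tilde Q_1 = Q^y_t|_{yz_{<t}}$, and (ii) the conditional divergence $\exp(D_{\alpha}(\tilde P_2|_{y_t}\|\tilde Q_2|_{y_t}))$ equals $\ell_t(y_{<t}\circ y_t)$ by construction of $A^*$. Both identifications rely on the fact that $\calM_1$ and $\calM_2$ run independently with no shared state, so $\calM_1$'s conditional response distribution depends only on its own past transcript $y_{<t}$ (together with the current query) and not on $z_{<t}$. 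By Lemma~\ref{lemma:renyi-charact} combined with the definition of $\ell_{t-1}(y_{<t})$ as a supremum over adversaries, we also have $\tilde P \preceq_{\beta} \ell_{t-1}(y_{<t})\tilde Q$.

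Applying Lemma~\ref{lemma:renyi-tracker} to $(\tilde P, \tilde Q)$ with the product split $\calY_1 = \calY$ (the coordinate for $y_t$) and $\calY_2$ the remaining transcript then yields $\tilde P_1\tilde\ell_1^{1/\beta} \preceq_{\beta} \ell_{t-1}(y_{<t})\tilde Q_1$, which, after substituting the identifications above, is exactly the first inequality of the claim. The symmetric inequality for $P^z,Q^z$ follows by running the same construction with the roles of $\calM_1$ and $\calM_2$ interchanged (and conditioning additionally on $y_t$, which does not affect $\calM_2$'s state by independence). I expect the main conceptual obstacle to be cleanly justifying the marginal/conditional identification via independence of the two mechanisms; once that is stated precisely, the rest is a direct invocation of the two earlier lemmas.
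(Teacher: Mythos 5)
Your proposal is correct and matches the paper's own argument: the paper likewise builds an adversary that runs one step of $\calA$ conditioned on the history $yz_{<t}$, submits that query to $\calM^b_1|_{y_{<t}}$, switches to the optimal adversary defining $\ell_t(y_{<t}\circ y_t)$ after seeing the response, and then applies Lemma~\ref{lemma:renyi-tracker} using that $\calM^b_1|_{y_{<t}}$ is $(\alpha,\log\ell_{t-1}(y_{<t}))$-R\'enyi DP. Your write-up only spells out more explicitly the marginal/conditional identifications via independence and the attainment of the supremum, which the paper leaves implicit.
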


\begin{proof}
Construct an adversary $\calA'$ interacting with $\calM^b_1|_{y_{<t}}$ as follows. $\calA'$ starts $\calA$ with the conditioning that $\calA$ has gone through the interaction history $yz_{<t}$. Then $\calA'$ simulates one step of $\calA$ and sends a query to $\calM^b_1|_{y_{<t}}$. Upon receiving the response $y$, $\calA'$ observes $y_t$ and switches to run the optimal adversary against $\calM^b_{1}|_{y_{\le t}}$ provided by \eqref{eq:monitor-ell}. By definition, $\calM^b_1|_{y_{<t}}$ is $(\alpha, \log(\ell_{t-1}(y_{<t})))$-R\'enyi DP. Applying Lemma~\ref{lemma:renyi-tracker} on $\IT(\calA', \calM^b_1|_{y_{<t}})$ completes the proof. The proof for $P^z_t$ is similar.
\end{proof}

Turning back to \eqref{eq:renyi-rewrite-2}, we first deduce that
\begin{align}
&~~~~    \sum_{(yz)_{\le T-1}} P((yz)_{\le T-1}) \sum_{y_T,z_T} P^y_T|_{yz_{<T}}(y_T)P^z_T|_{yz_{<T},y_T}(z_T) h(yz) \notag \\
&\le \sum_{(yz)_{\le T-1}} P((yz)_{\le T-1}) \sum_{y_T} P^y_T|_{yz_{<T}}(y_T) \left(  r_{T-1}(z_{<T})\sum_{z_T} Q^z_T|_{yz_{<T},y_T}(z_T) h(yz)^\beta \right)^{1/\beta} \notag \\
&\le \sum_{(yz)_{\le T-1}} P((yz)_{\le T-1}) \left( r_{T-1}(z_{<T}) \ell_{T-1}(y_{<T}) \sum_{y_T,z_T} Q^y_T|_{yz_{<T}}(y_T)  \ Q^z_T|_{yz_{<T},y_T}(z_T) h(yz)^\beta \right)^{1/\beta}. \label{eq:to-be-manipulate}
\end{align}
So far we haven't utilized Claim~\ref{claim:transfer-claim} yet. Denote 
\[
H(yz_{\le T-1}) := \left(\ell_{T-1}(y_{<T}) \sum_{y_T,z_T} Q^y_T|_{yz_{<T}}(y_T)  \ Q^z_T|_{yz_{<T},y_T}(z_T) h(yz)^\beta \right)^{1/\beta}.
\]
Applying Claim~\ref{claim:transfer-claim} on \eqref{eq:to-be-manipulate} for $P^z_{T-1}$ yields that
\begin{align}
&~~~~   \sum_{(yz)_{\le T-2}, y_{T-1}}P((yz)_{\le T-2}, y_{T-1}) \sum_{z_{T-1}} P^z_{T-1}|_{yz_{\le T-2},y_{T-1}}(z_{T-1}) r_{T-1}(z_{<T})^{1/\beta} H \notag \\
& \le  \sum_{(yz)_{\le T-2}, y_{T-1}}P((yz)_{\le T-2}, y_{T-1}) \left( r_{T-2}(z_{\le T-2}) \sum_{z_{T-1}} Q^z_{T-1}|_{yz_{\le T-2},y_{T-1}}(z_{T-1}) H^{\beta} \right)^{1/\beta}. \label{eq:manipulate-1} 
\end{align}
We proceed to apply Claim~\ref{claim:transfer-claim} on \eqref{eq:manipulate-1} for $P^y_{T-1},P^z_{T-2}, P^y_{T-2}\dots, P^z_{1}, P^y_1$ in order. We can get
\begin{align}
&~~~~    \sum_{(yz)_{\le T-1}} P((yz)_{\le T-1}) \sum_{y_T,z_T} P^y_T|_{yz_{<T}}(y_T)P^z_T|_{yz_{<T},y_T}(z_T) h(yz) \notag \\
&\le  \left( \ell_0(\emptyset) r_0(\emptyset)\sum_{yz} Q(yz) h(yz)^{\beta} \right)^{1/\beta}.
\end{align}
This shows that $P\preceq e^{\eps_1+\eps_2} Q$, which consequently implies that $D_{\alpha}(P\|Q) \le \eps_1+\eps_2$. Similarly, we can bound $D_{\alpha}(Q\|P)\le \eps_1+\eps_2$. Combining two bounds together completes the proof.
\end{proof}

\subsection{Proof for Concentrated DP}\label{append:concentrate}

In this section, we prove Corollary~\ref{coro:CDP}. We recall the definition of zero-concentrated DP and truncated concentrated DP.

\begin{definition}[zero-concentrated differential privacy, \cite{DBLP:conf/tcc/BunS16}]
Let $\rho > 0$ be a real and $\calM$ be a mechanism. $\calM$ is called $\rho$-zero-concentrated DP (or $\rho$-zCDP for short), if for every $\alpha\in (1,+\infty)$, $\calM$ is $(\alpha, \alpha\cdot \rho)$-RDP.
\end{definition}

\begin{definition}[truncated concentrated differential privacy \cite{DBLP:conf/stoc/BunDRS18}]
Let $\rho > 0, \omega > 1$ be two reals, and $\calM$ be a mechanism. $\calM$ is called $(\rho,\omega)$-truncated DP (or $(\rho,\omega)$-tCDP), if for every $\alpha \in (1,\omega)$, $\calM$ is $(\alpha, \alpha\cdot \rho)$-RDP.
\end{definition}

We are ready to prove Corollary~\ref{coro:CDP} below.

\begin{proof}
We first prove for zCDP. Suppose $\calM_1,\dots, \calM_k$ are $k$ interactive mechanisms, where for each $i\in [k]$, $\calM_i$ is $\rho_i$-zCDP. By definition, we know that $\calM_i$ is $(\alpha,\alpha\rho_i)$-RDP for every $\alpha > 1$. By Theorem~\ref{theo:renyi}, we know that $\COMP(\calM_1,\dots, \calM_k)$ satisfies $(\alpha,\alpha(\sum_{i} \rho_i))$-RDP. Since this argument holds for every $\alpha > 1$, we conclude that $\COMP(\calM_1,\dots, \calM_k)$ satisfies $(\sum_{i}\rho_i)$-zCDP.

The proof for tCDP is similar. Fix $\omega > 1$. Again let $\calM_1,\dots, \calM_k$ are $k$ interactive mechanisms, where for each $i\in [k]$, $\calM_i$ is $(\rho_i,\omega)$-tCDP. Then, for every $\alpha \in (1,\omega)$, we know that $\calM_i$ is $(\alpha,\alpha\rho_i)$-RDP by definition. Theorem~\ref{theo:renyi} then shows that $\COMP(\calM_1,\dots, \calM_k)$ satisfies $(\alpha,\alpha(\sum_{i} \rho_i))$-RDP. Since the argument holds for every $\alpha \in (1, \omega)$, we conclude that $\COMP(\calM_1,\dots, \calM_k)$ satisfies $(\sum_{i}\rho_i, \omega)$-tCDP.
\end{proof}

\section{A Motivating Example of Concurrent Composition}\label{append:example}

To demonstrate the power of concurrent composition, in this section, we use Theorem~\ref{theo:approximate} to analyze a simple private ``Guess-and-Check'' algorithm. We remark that this is a rather preliminary application: the weaker concurrent composition theorem by \citet{DBLP:conf/tcc/VadhanW21} is sufficient to do the job. However, the main purpose of this section is to highlight the importance of concurrent composition, and hopefully inspire researchers to design more sophisticated algorithms.

\medskip\noindent\textbf{Setup.} Now we describe the problem. The private algorithm holds a sensitive data set $X$. The user keeps issuing queries to the algorithm, where each query consists of a $1$-Lipschitz function $f_i$ and a guess $\tau_i\in \mathbb{R}$ for the value of $f_i(X)$. The algorithm's job is to verify if $f_i(X)\approx \tau_i$. If it is the case, the algorithm reports ``PASS'' and continues to the next query. Otherwise, the algorithm reports ``WRONG'' and a value $v_i$ that is approximately equal to $f_i(X)$ (i.e., the algorithm not only declares the invalidity of the user's guess, but also provides a correct estimation for $f_i(X)$).

We consider the following algorithm.

\newcommand{\Lap}{\mathrm{Lap}}

\begin{algorithm2e}[H]
\LinesNumbered
    \caption{The Private Guess-and-Check}
    \label{algo:guess-check}
    
    \SetKwProg{Fn}{Function}{:}{}
    \SetKwProg{Pg}{Program}{:}{\KwRet}
    
    \SetKwFunction{FSelect}{Selection}
    \SetKwFunction{FTest}{Test}
    
    \DontPrintSemicolon
    
    \SetKwFunction{FMain}{Main}
    \SetKwFunction{FQuery}{Query}
    
    \KwIn{
        Private dataset $X$. Error tolerance parameter $E > 0$. Privacy-related parameters $c\ge 1, \varepsilon \in (0, 1)$.
    }
    \Pg{}{
        $\rho\gets \Lap\left( \frac{1}{\varepsilon} \right)$   \tcp*[f]{Note that this noise has standard deviation $\approx \frac{1}{\eps}$} \;
        
        \For{$i=1,2,\dots, $}{
            Receive the next query $(f_i, \tau_i)$ \;
            $\gamma_i \gets \Lap(c/\varepsilon)$ \;
            \If{$|f_i(X)-\tau_i|+\gamma_i \ge E + \rho$}{
                $v_i\gets f_i(X) + \Lap(c/\varepsilon)$ \;
                Report $(\mathrm{WRONG}, v_i)$\;
                $t\gets t + 1$\;
                \If{$t = c$}{
                    HALT the algorithm.
                }
            }
            \Else{
                Return $\mathrm{PASS}$\;
            }
        }
    }
\end{algorithm2e}

\medskip\noindent\textbf{Discussions.} Algorithm~\ref{algo:guess-check} is parameterized by an error tolerance parameter $E > 0$ and two privacy parameters $c\ge 1, \varepsilon \in (0, 1)$. Roughly speaking, it can process queries until identifying at least $c$ queries whose guesses deviate from the true value by at least (roughly) $E$. It works by (concurrently) composing a variant of the sparse vector technique by \cite{DBLP:journals/pvldb/LyuSL17} with the standard Laplace noise-adding mechanism.

The main advantage of the \cite{DBLP:journals/pvldb/LyuSL17} SVT is that it only adds noise to the threshold once (Line 2 of algorithm~\ref{algo:guess-check}), using a \emph{much smaller} noise, which makes the SVT algorithm more accurate. Since the utility guarantee of the algorithm is not the focus of this work, we omit more discussions here and refer interested readers to \citep{DBLP:journals/pvldb/LyuSL17,DBLP:conf/nips/ZhuW20} for more detail.

We consider the privacy guarantee of Algorithm~\ref{algo:guess-check}. In fact, without the concurrent composition framework, it is not clear whether or not Algorithm~\ref{algo:guess-check} is really private! If we replace Line 7 of the algorithm by $v_i\gets 0$, then the algorithm is indeed $(3\eps, 0)$-private, because it is just a faithful implementation of the \citet{DBLP:journals/pvldb/LyuSL17} SVT. However, in Algorithm~\ref{algo:guess-check}, the algorithm reports a correct estimation $v_i$ for each inaccurate guess, which implies that the future query to the algorithm may depend on $v_i$, and thus on the private data set $X$. In this case, the original analysis from \citep{DBLP:journals/pvldb/LyuSL17} does not hold anymore.

\medskip\noindent\textbf{Analyzing the privacy.} While it is not hard to prove the privacy property of Algorithm~\ref{algo:guess-check} by examining the proof of \citet{DBLP:journals/pvldb/LyuSL17} carefully and applying some modifications, here we show that Algorithm~\ref{algo:guess-check} admits a fairly straightforward privacy proof under the concurrent composition framework, using the privacy theorem by \citet{DBLP:journals/pvldb/LyuSL17} as a black box. We do the analysis now. First, we have the following lemma from \citep{DBLP:journals/pvldb/LyuSL17}.

\begin{lemma}[Theorem~2 in \citet{DBLP:journals/pvldb/LyuSL17}]\label{lemma:Lyu-privacy}
Consider replacing Line~7 of Algorithm~\ref{algo:guess-check} with $v_i\gets 0$. The resulting algorithm is $(3\eps,0)$-DP.
\end{lemma}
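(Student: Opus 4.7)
The plan is to prove Lemma~\ref{lemma:Lyu-privacy} via a direct coupling argument adapted to the ``single noisy threshold'' SVT of \citet{DBLP:journals/pvldb/LyuSL17}. Fix two neighboring datasets $X, X'$ and any adaptive query strategy. Fix a transcript $\pi$ of PASS/WRONG labels. Because line 7 has been replaced by $v_i \gets 0$, no additional data-dependent information leaves the mechanism on a WRONG round, and so conditional on $\pi$ the sequence of queries $(f_i, \tau_i)$ is a deterministic function of $\pi$. This reduces the problem to a static one: compare the probabilities that the non-interactive algorithm produces the specific label sequence $\pi$ on $X$ versus $X'$.

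Next I would set up a coupling between the noise variables used on $X$ and on $X'$. Write $(\rho, \gamma_1, \gamma_2, \ldots)$ for the execution on $X$ and $(\rho', \gamma_1', \gamma_2', \ldots)$ for the execution on $X'$. Define $\rho' = \rho + 1$; set $\gamma_i' = \gamma_i$ whenever $o_i = \mathrm{PASS}$ and $\gamma_i' = \gamma_i + 2$ whenever $o_i = \mathrm{WRONG}$. Using $1$-Lipschitzness of each $f_i$ (so $|f_i(X) - f_i(X')| \le 1$), one verifies that every PASS event is preserved, since $|f_i(X') - \tau_i| + \gamma_i' \le |f_i(X) - \tau_i| + 1 + \gamma_i < E + \rho + 1 = E + \rho'$, and every WRONG event is preserved, since $|f_i(X') - \tau_i| + \gamma_i' \ge |f_i(X) - \tau_i| - 1 + \gamma_i + 2 \ge E + \rho'$. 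Thus the coupling produces the transcript $\pi$ on both sides.

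Finally, I would bound the Radon-Nikodym ratio of the two joint noise densities pointwise. Because $\rho \sim \Lap(1/\eps)$ is shifted by $1$, the contribution to the ratio is at most $e^\eps$. Each $\gamma_i \sim \Lap(c/\eps)$ corresponding to a WRONG is shifted by $2$, contributing a factor of at most $e^{2\eps/c}$; since the algorithm halts after exactly $c$ WRONGs, the product of these contributions is at most $e^{2\eps}$. The PASS coordinates contribute ratio $1$. Multiplying, the overall density ratio is at most $e^{3\eps}$ \emph{pointwise}, so integrating against any transcript-level event gives $(3\eps, 0)$-DP; the $\delta = 0$ bound is clean because no exceptional event is ever invoked.

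The main obstacle I anticipate is keeping the adaptivity bookkeeping honest: the queries $(f_i, \tau_i)$ genuinely depend on previous outputs, so one must make sure that ``condition on the transcript, the queries are fixed'' is a legitimate step. This is precisely where replacing $v_i$ by $0$ is crucial---if $v_i$ were an actual noisy estimate, the subsequent queries would depend on the value $v_i$, not just on the PASS/WRONG label, and the coupling above would have to be augmented with additional shifts to handle each $v_i$, costing extra privacy. With $v_i\gets 0$ the transcript completely determines the queries and the argument closes at $3\eps$.
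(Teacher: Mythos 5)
Your proposal is correct, but it is worth noting that the paper does not prove this statement at all: Lemma~\ref{lemma:Lyu-privacy} is imported verbatim as Theorem~2 of \citet{DBLP:journals/pvldb/LyuSL17}, and the surrounding text explicitly uses that privacy guarantee as a black box (that is the whole point of the section --- to show the concurrent composition theorem lets one avoid re-doing SVT-style analyses). What you have written is essentially a self-contained reconstruction of the standard proof of that cited theorem, specialized to the noise scales of Algorithm~\ref{algo:guess-check}: shift the single threshold noise $\rho\sim\Lap(1/\eps)$ by $1$ (cost $e^{\eps}$), shift the query noise $\gamma_i\sim\Lap(c/\eps)$ by $2$ on each of the at most $c$ WRONG rounds (cost $e^{2\eps/c}$ each, $e^{2\eps}$ total), and observe that PASS rounds survive the shift for free; the inequalities you check for PASS/WRONG preservation and the resulting pointwise density ratio $e^{3\eps}$ are exactly right, and your observation that replacing $v_i$ by $0$ is what makes the transcript determine the queries is the correct reason the reduction to a fixed query sequence is legitimate. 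Two small points of rigor you should make explicit if you write this up: the shift applied to $\gamma_i$ depends on the outcome of round $i$, so the coupling should be phrased as a fixed shift vector per transcript $\pi$ (injective on the event ``transcript $=\pi$''), with a round-by-round induction showing the two executions produce the same prefix of outcomes and hence the same adaptively chosen queries; and a randomized adversary is handled by conditioning on (or averaging over) its coins, which are independent of the mechanism's noise. With those caveats your argument buys self-containedness and an explicit constant check ($1\cdot\eps + c\cdot 2\eps/c = 3\eps$), whereas the paper's route buys brevity and, more importantly, illustrates its thesis that the concurrent composition theorem lets one cite such per-component guarantees without reopening their proofs.
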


The following fact is well known.

\begin{lemma}[Laplace mechanism]\label{lemma:Lap}
Consider the following algorithm: given a list of $c$ adaptively chosen, $1$-Lipschitz queries $(g_1,\cdots, g_c)$, answer each query with $g_i(X)+\Lap(c/\varepsilon)$. The algorithm is $(\eps,0)$-DP.
\end{lemma}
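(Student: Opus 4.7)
The plan is to decompose the $c$-query algorithm into $c$ individual applications of the single-query Laplace mechanism and then invoke the standard (adaptive) basic composition theorem for pure differential privacy.

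First, I would analyze a single round in isolation. Fix a 1-Lipschitz query $g\colon \mathcal{X}^n \to \mathbb{R}$, and consider the mechanism $\mathcal{M}(X) = g(X) + \Lap(c/\eps)$. For any pair of neighboring data sets $X,X'$ we have $|g(X) - g(X')| \le 1$ by 1-Lipschitzness. The Laplace density $p_\lambda(t) = \tfrac{1}{2\lambda} e^{-|t|/\lambda}$ with $\lambda = c/\eps$ then satisfies, for every measurable $t\in\mathbb{R}$,
\[
\frac{p_\lambda(t - g(X))}{p_\lambda(t - g(X'))} \;\le\; \exp\!\left(\frac{|g(X) - g(X')|}{c/\eps}\right) \;\le\; e^{\eps/c}.
\]
Hence a single round is $(\eps/c, 0)$-differentially private.

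Next, I would lift this to the $c$-round adaptive setting. Since the queries $g_1,\dots,g_c$ are chosen adaptively based on the previous answers, the overall algorithm is exactly the (adaptive) sequential composition of $c$ mechanisms, each of which is $(\eps/c, 0)$-DP by the previous paragraph. The basic composition theorem for pure DP (due to \citet{DBLP:conf/tcc/DworkMNS06}, and valid in the fully adaptive setting) then yields a total privacy loss of $c \cdot (\eps/c) = \eps$, with $\delta = 0$. This gives the desired $(\eps,0)$-DP guarantee.

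There is no real obstacle here; both ingredients are textbook and the Lipschitz condition is exactly what is needed to cap the per-query privacy loss at $\eps/c$. The only point worth being careful about is that the adversary (the querier) may select $g_{i+1}$ using $g_1(X) + \Lap(c/\eps), \dots, g_i(X) + \Lap(c/\eps)$, so one must invoke the \emph{adaptive} version of basic composition rather than its non-adaptive form; this is standard and does not alter the parameters. The statement then follows immediately.
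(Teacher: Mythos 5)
Your proof is correct: the per-query density-ratio bound giving $(\eps/c,0)$-DP for each round, followed by adaptive basic composition over the $c$ rounds, is exactly the standard argument underlying this fact. The paper itself offers no proof of Lemma~\ref{lemma:Lap}, stating it as well known, so your write-up simply supplies the textbook justification the paper implicitly relies on (with the correct attention to the adaptive choice of queries).
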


Combining Lemmas~\ref{lemma:Lyu-privacy} and \ref{lemma:Lap} under the concurrent composition framework directly yields the following result.

\begin{theorem}
Algorithm~\ref{algo:guess-check} is $(4\eps, 0)$-DP.
\end{theorem}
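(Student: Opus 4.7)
The plan is to exhibit Algorithm~\ref{algo:guess-check} as a post-processing of the concurrent composition of two simpler mechanisms whose privacy guarantees are already at hand. Specifically, let $\calM_1$ be the mechanism obtained from Algorithm~\ref{algo:guess-check} by replacing Line~7 with $v_i\gets 0$: it is the \cite{DBLP:journals/pvldb/LyuSL17} SVT variant applied to the $1$-Lipschitz queries $g_i(X):=|f_i(X)-\tau_i|$ against the single noisy threshold $E+\rho$, and by Lemma~\ref{lemma:Lyu-privacy} it is $(3\eps,0)$-DP. Let $\calM_2$ be the Laplace mechanism of Lemma~\ref{lemma:Lap} which, on each of at most $c$ adaptively chosen $1$-Lipschitz queries $h$, returns $h(X)+\Lap(c/\eps)$; it is $(\eps,0)$-DP.

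Next I would build a simulator $\calS$ that plays the role of the adversary against $\COMP(\calM_1,\calM_2)$ while serving the true user $\calA$. On the $i$-th user query $(f_i,\tau_i)$, the simulator forwards $g_i$ to $\calM_1$ and relays whatever $\calM_1$ returns. If the response is $\mathrm{PASS}$, the simulator proceeds to the next round; if the response is $\mathrm{WRONG}$, the simulator additionally queries $\calM_2$ with $f_i$, receives $v_i=f_i(X)+\Lap(c/\eps)$, and returns $(\mathrm{WRONG},v_i)$ to $\calA$. Because $\calM_1$ halts after its $c$-th $\mathrm{WRONG}$ response, the simulator issues at most $c$ queries to $\calM_2$, so $\calM_2$'s promised privacy guarantee applies. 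The internal noise variables of $\calM_1$ and $\calM_2$ are independent and have exactly the distributions used inside Algorithm~\ref{algo:guess-check}, so the joint distribution of the transcript $\calS$ produces for $\calA$ is identical to $\IT(\calA:\text{Alg.}~\ref{algo:guess-check})$.

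Finally I would invoke Theorem~\ref{theo:approximate} with $k=2$, $(\eps_1,\delta_1)=(3\eps,0)$, and $(\eps_2,\delta_2)=(\eps,0)$. In the pure-DP case the optimal (sequential) composition parameters simply add, giving $(4\eps,0)$-DP for $\COMP(\calM_1,\calM_2)$. Since the user's view is a post-processing of $\IT(\calA:\calM_1,\calM_2)$ (the simulator uses no knowledge of $X$ beyond what the two mechanisms reveal), and differential privacy is closed under post-processing, Algorithm~\ref{algo:guess-check} inherits the $(4\eps,0)$-DP guarantee. The only subtlety worth checking carefully is that $\calM_1$ and $\calM_2$ are bona fide interactive systems in the sense of Definition~\ref{def:system} and that the query-budget cap of $\calM_2$ is respected along every execution path; the latter is immediate from the fact that $\calM_1$ terminates the interaction after its $c$-th $\mathrm{WRONG}$ response, and this is the main (minor) bookkeeping hurdle in the argument.
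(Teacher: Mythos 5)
Your proposal is correct and follows essentially the same route as the paper: decompose Algorithm~\ref{algo:guess-check} into the $v_i\gets 0$ variant of the SVT (which is $(3\eps,0)$-DP by Lemma~\ref{lemma:Lyu-privacy}) and the $c$-query Laplace mechanism (which is $(\eps,0)$-DP by Lemma~\ref{lemma:Lap}), simulate the algorithm as a post-processing of an adversary interacting with their concurrent composition, and apply Theorem~\ref{theo:approximate} to get $(4\eps,0)$-DP. The only difference is presentational (you make the simulator and the query-budget bookkeeping slightly more explicit), so no further comparison is needed.
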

\begin{proof}
Consider simulating Algorithm~\ref{algo:guess-check} by concurrently composing two algorithms $A_1,A_2$. $A_1$ is just a modification of Algorithm~\ref{algo:guess-check} where we replace Line 7 in Algorithm~\ref{algo:guess-check} with $v_i\gets 0$. By Lemma~\ref{lemma:Lyu-privacy}, $A_1$ is $(3\eps,0)$-DP. $A_2$ accepts at most $c$ $1$-Lipschitz query. For each query $g_i$, $A_2$ responds with $g_i(X) + \Lap(c/\eps)$. By Lemma~\ref{lemma:Lap}, $A_2$ is $(\eps, 0)$-DP. By Theorem~\ref{theo:approximate}, $\COMP(A_1,A_2)$ is $(4\eps,0)$-DP.

We now describe how to simulate Algorithm~\ref{algo:guess-check} with $\COMP(A_1,A_2)$. For each query $(f_i, \tau_i)$ to Algorithm~\ref{algo:guess-check}, we first feed it into $A_1$ and observe the outcome. We pass this query if the outcome is $\mathrm{PASS}$. Otherwise, the outcome must be $(\mathrm{WRONG}, 0)$. We then query $A_2$ with $f_i$ to get an estimation $f_i+\Lap(c/\eps)$, and think of this estimation as the ``$v_i$'' returned by Algorithm~\ref{algo:guess-check}. In this way, it is easy to see that we faithfully simulate Algorithm~\ref{algo:guess-check} by interacting with $\COMP(A_1, A_2)$. Since $\COMP(A_1,A_2)$ is $(4\eps,0)$-DP, Algorithm~\ref{algo:guess-check} must be $(4\eps, 0)$-DP also. This completes the proof.
\end{proof}

\begin{remark}
Finally, we remark that a similar private ``Guess-and-Check'' algorithm was also proposed and analyzed by \citet{DBLP:conf/nips/ZhuW20}, where the authors also considered using a version of SVT \emph{without} refreshing the threshold after answering each ``meaningful'' query. Therefore, their algorithm is also subject to the concurrent composition issue, which seems to be overlooked in the original analysis of \citet{DBLP:conf/nips/ZhuW20}. Since they were working with R\'enyi DP, our Theorem~\ref{theo:renyi} provides a remedy to this issue easily.
\end{remark}

\end{document}